\newcommand{\dwedge}[2]%
{\overset{#1}{\underset{#2}{\mbox{$\bigwedge\hspace{-2ex}\bigwedge$}}}}
\newcommand{\dvee}[2]%
{\overset{#1}{\underset{#2}{\mbox{$\bigvee\hspace{-2ex}\bigvee$}}}}
\newcommand{\inMath}[1]{\relax\ifmmode{#1}%
\else{\mbox{$#1$}}\fi}
\newcommand{\vdot}%
{\unitlength0.4mm
\begin{picture}(8,10)
\put(4,0){.}
\put(0,10){.}
\put(8,10){.}
\put(2,5){.}
\put(6,5){.}
\end{picture}
}
\newcommand{\ignore}[1]{}
\newcommand{\romanref}[1]{%
\if \ref{#1}\empty {\setcounter{romanrefcounter}0} \else 
{\setcounter{romanrefcounter}{\ref{#1}}}\fi%
{\it \roman{romanrefcounter}}}
\newcommand{\CA}{{\cal A}}
\newcommand{\BG}{{\bf G}}
\newcommand{\BL}{{{\bf L}}}
\newcommand{\BM}{{\bf M}}
\newcommand{\Bn}{{\bf n}}
\newcommand{\BS}{{\bf S}}
\newcommand{\tO}{\twoheadrightarrow}
\newlength{\circlen}
\newlength{\symblen}
  \newcommand{\text}[1]{\relax
    \ifmmode\mathchoice
      {\hbox{\the\textfont0\relax#1}}%
      {\hbox{\the\textfont0\relax#1}}%
      {\hbox{\the\scriptfont0\relax#1}}%
      {\hbox{\the\scriptscriptfont0\relax#1}}%
    \else{\relax#1}\fi}
\newcommand{\defsub}[1]{}
\def\twoheaddownarrow{\rlap{$\downarrow$}\raise-.5ex\hbox{$\downarrow$}}%%
\def\twoheaduparrow{\rlap{$\uparrow$}\raise.5ex\hbox{$\uparrow$}}%%
\def\texturespicture #1 by #2 (#3){
\vbox to #2 {\hrule width #1 height 0pt depth 0pt
}}
\def\scaledpicture #1 by #2 (#3 scaled #4){{\dimen0=#1 \dimen1=#2
\divide\dimen0 by 1000 \multiply \dimen0 by #4
\divide\dimen1 by 1000 \multiply \dimen1 by #4
\texturespicture\dimen0 by \dimen1 (#3 scaled #4)}}
\newdimen\proofrulebreadth \proofrulebreadth=.05em
\newdimen\proofdotseparation \proofdotseparation=1.25ex
\newdimen\proofrulebaseline \proofrulebaseline=2ex
\let\then\relax
\def\hfi{\hskip0pt plus.0001fil}
\mathchardef\squigto="3A3B
\newif\ifinsideprooftree\insideprooftreefalse
\newif\ifonleftofproofrule\onleftofproofrulefalse
\newif\ifproofdots\proofdotsfalse
\newif\ifdoubleproof\doubleprooffalse
\let\wereinproofbit\relax
\newdimen\shortenproofleft
\newdimen\shortenproofright
\newdimen\proofbelowshift
\newbox\proofabove
\newbox\proofbelow
\newbox\proofrulename
\def\shiftproofbelow{\let\next\relax\afterassignment\setshiftproofbelow\dimen0 }
\def\shiftproofbelowneg{\def\next{\multiply\dimen0 by-1 }%
\afterassignment\setshiftproofbelow\dimen0 }
\def\setshiftproofbelow{\next\proofbelowshift=\dimen0 }
\def\setproofrulebreadth{\proofrulebreadth}
\def\prooftree{% NESTED ZERO (\ifonleftofproofrule)
%
% first find out whether we're at the left-hand end of a proof rule
\ifnum  \lastpenalty=1
\then   \unpenalty
\else   \onleftofproofrulefalse
\fi
%
% some space on left (except if we're on left, and no infinity for outermost)
\ifonleftofproofrule
\else   \ifinsideprooftree
        \then   \hskip.5em plus1fil
        \fi
\fi
%
% begin our proof tree environment
\bgroup% NESTED ONE (\proofbelow, \proofrulename, \proofabove,
%               \shortenproofleft, \shortenproofright, \proofrulebreadth)
\setbox\proofbelow=\hbox{}\setbox\proofrulename=\hbox{}%
\let\justifies\proofover\let\leadsto\proofoverdots\let\Justifies\proofoverdbl
\let\using\proofusing\let\[\prooftree
\ifinsideprooftree\let\]\endprooftree\fi
\proofdotsfalse\doubleprooffalse
\let\thickness\setproofrulebreadth
\let\shiftright\shiftproofbelow \let\shift\shiftproofbelow
\let\shiftleft\shiftproofbelowneg
\let\ifwasinsideprooftree\ifinsideprooftree
\insideprooftreetrue
%
% now begin to set the top of the rule (definitions local to it)
\setbox\proofabove=\hbox\bgroup$\displaystyle % NESTED TWO
\let\wereinproofbit\prooftree
%
% these local variables will be copied out:
\shortenproofleft=0pt \shortenproofright=0pt \proofbelowshift=0pt
%
% flags to enable inner proof tree to detect if on left:
\onleftofproofruletrue\penalty1
}
\def\eproofbit{% NESTED TWO
%
% various hacks applicable to hypothesis list 
\ifx    \wereinproofbit\prooftree
\then   \ifcase \lastpenalty
        \then   \shortenproofright=0pt  % 0: some other object, no indentation
        \or     \unpenalty\hfil         % 1: empty hypotheses, just glue
        \or     \unpenalty\unskip       % 2: just had a tree, remove glue
        \else   \shortenproofright=0pt  % eh?
        \fi
\fi
%
% pass out crucial values from scope
\global\dimen0=\shortenproofleft
\global\dimen1=\shortenproofright
\global\dimen2=\proofrulebreadth
\global\dimen3=\proofbelowshift
\global\dimen4=\proofdotseparation
\global\count255=\proofdotnumber
%
% end the box
$\egroup  % NESTED ONE
%
% restore the values
\shortenproofleft=\dimen0
\shortenproofright=\dimen1
\proofrulebreadth=\dimen2
\proofbelowshift=\dimen3
\proofdotseparation=\dimen4
\proofdotnumber=\count255
}
\def\proofover{% NESTED TWO
\eproofbit % NESTED ONE
\setbox\proofbelow=\hbox\bgroup % NESTED TWO
\let\wereinproofbit\proofover
$\displaystyle
}%
\def\proofoverdbl{% NESTED TWO
\eproofbit % NESTED ONE
\doubleprooftrue
\setbox\proofbelow=\hbox\bgroup % NESTED TWO
\let\wereinproofbit\proofoverdbl
$\displaystyle
}%
\def\proofoverdots{% NESTED TWO
\eproofbit % NESTED ONE
\proofdotstrue
\setbox\proofbelow=\hbox\bgroup % NESTED TWO
\let\wereinproofbit\proofoverdots
$\displaystyle
}%
\def\proofusing{% NESTED TWO
\eproofbit % NESTED ONE
\setbox\proofrulename=\hbox\bgroup % NESTED TWO
\let\wereinproofbit\proofusing
\kern0.3em$
}
\def\endprooftree{% NESTED TWO
\eproofbit % NESTED ONE
% \dimen0 =     length of proof rule
% \dimen1 =     indentation of conclusion wrt rule
% \dimen2 =     new \shortenproofleft, ie indentation of conclusion
% \dimen3 =     new \shortenproofright, ie
%                space on right of conclusion to end of tree
% \dimen4 =     space on right of conclusion below rule
  \dimen5 =0pt% spread of hypotheses
% \dimen6, \dimen7 = height & depth of rule
%
% length of rule needed by proof above
\dimen0=\wd\proofabove \advance\dimen0-\shortenproofleft
\advance\dimen0-\shortenproofright
%
% amount of spare space below
\dimen1=.5\dimen0 \advance\dimen1-.5\wd\proofbelow
\dimen4=\dimen1
\advance\dimen1\proofbelowshift \advance\dimen4-\proofbelowshift
%
% conclusion sticks out to left of immediate hypotheses
\ifdim  \dimen1<0pt
\then   \advance\shortenproofleft\dimen1
        \advance\dimen0-\dimen1
        \dimen1=0pt
%       now it sticks out to left of tree!
        \ifdim  \shortenproofleft<0pt
        \then   \setbox\proofabove=\hbox{%
                        \kern-\shortenproofleft\unhbox\proofabove}%
                \shortenproofleft=0pt
        \fi
\fi
%
% and to the right
\ifdim  \dimen4<0pt
\then   \advance\shortenproofright\dimen4
        \advance\dimen0-\dimen4
        \dimen4=0pt
\fi
%
% make sure enough space for label
\ifdim  \shortenproofright<\wd\proofrulename
\then   \shortenproofright=\wd\proofrulename
\fi
%
% calculate new indentations
\dimen2=\shortenproofleft \advance\dimen2 by\dimen1
\dimen3=\shortenproofright\advance\dimen3 by\dimen4
%
% make the rule or dots, with name attached
\ifproofdots
\then
        \dimen6=\shortenproofleft \advance\dimen6 .5\dimen0
        \setbox1=\vbox to\proofdotseparation{\vss\hbox{$\cdot$}\vss}%
        \setbox0=\hbox{%
                \advance\dimen6-.5\wd1
                \kern\dimen6
                $\vcenter to\proofdotnumber\proofdotseparation
                        {\leaders\box1\vfill}$%
                \unhbox\proofrulename}%
\else   \dimen6=\fontdimen22\the\textfont2 % height of maths axis
        \dimen7=\dimen6
        \advance\dimen6by.5\proofrulebreadth
        \advance\dimen7by-.5\proofrulebreadth
        \setbox0=\hbox{%
                \kern\shortenproofleft
                \ifdoubleproof
                \then   \hbox to\dimen0{%
                        $\mathsurround0pt\mathord=\mkern-6mu%
                        \cleaders\hbox{$\mkern-2mu=\mkern-2mu$}\hfill
                        \mkern-6mu\mathord=$}%
                \else   \vrule height\dimen6 depth-\dimen7 width\dimen0
                \fi
                \unhbox\proofrulename}%
        \ht0=\dimen6 \dp0=-\dimen7
\fi
%
% set up to centre outermost tree only
\let\doll\relax
\ifwasinsideprooftree
\then   \let\VBOX\vbox
\else   \ifmmode\else$\let\doll=$\fi
        \let\VBOX\vcenter
\fi
% this \vbox or \vcenter is the actual output:
\VBOX   {\baselineskip\proofrulebaseline \lineskip.2ex
        \expandafter\lineskiplimit\ifproofdots0ex\else-0.6ex\fi
        \hbox   spread\dimen5   {\hfi\unhbox\proofabove\hfi}%
        \hbox{\box0}%
        \hbox   {\kern\dimen2 \box\proofbelow}}\doll%
%
% pass new indentations out of scope
\global\dimen2=\dimen2
\global\dimen3=\dimen3
\egroup % NESTED ZERO
\ifonleftofproofrule
\then   \shortenproofleft=\dimen2
\fi
\shortenproofright=\dimen3
%
% some space on right and flag we've just made a tree
\onleftofproofrulefalse
\ifinsideprooftree
\then   \hskip.5em plus 1fil \penalty2
\fi
}
\newcommand\strikethrough[1]{{\setbox0=\hbox{$#1$}
\hrule height.75ex depth-.65ex width\wd0 \kern-\wd0\box0}}
\mathchardef\gt="313E \mathchardef\lt="313C
\def\undern#1{\vtop{\ialign{##\crcr
 $\hfil\displaystyle{#1}\hfil$\crcr
 \noalign{\kern-.1pt\nointerlineskip}%
 ~\,\raisebox{-.5ex}{$n \to \infty$} \crcr}}}
\def\undern#1{\vtop{\ialign{##\crcr
 $\hfil\displaystyle{#1}\hfil$\crcr
 \noalign{\kern-.1pt\nointerlineskip}%
 ~\,\raisebox{-.5ex}{$i$} \crcr}}}
 \newtheorem{theorem}{Theorem}[section]					
 \newtheorem{definition}[theorem]{Definition}				
\newtheorem{remark}[theorem]{Remark}					
\newtheorem{example}[theorem]{Example}					
\newenvironment{proof}{\begin{trivlist}\item[]{\bf Proof.}}{\hspace*{\fill} $\blacksquare$ \end{trivlist}}
\newcommand\upa[1]{\{ #1\}}
\newcommand\DA{\Delta_{\mathcal A}}
\newcommand\DAp{\Delta_{\mathcal A'}}
\newcommand\In{\text{\it In}}
\newcommand\pG{\pi{\bf G}_3}
\newcommand\bn{{\bf n}}
\newcommand\bm{{\bf m}}
\newcommand\bs{{\bf s}}
\newcommand\lbm{\lambda_\bm}
\newcommand\OA{\Omega_{\CA}}
\newcommand\PA{\Psi_{\CA}}
\newcommand\rhod{\mathbin{\rho_{\text{\tiny $\lor$}}}}
\newcommand\rhoc{\mathbin{\rho_{\text{\tiny $\land$}}}}
\newcommand\BLR{\BL(R)}
\begin{document}
\title{The Attack as Intuitionistic Negation}

\date{Compiled on \today} %\ at \currenttime}

\author{D. Gabbay\\
King's College London,\\ Department of Informatics,\\
Strand, London, WC2R 2LS, UK;\\
Bar Ilan University, Ramat Gan, Israel\\
and  \\
University of Luxembourg, Luxembourg.\\
{\tt dov.gabbay@kcl.ac.uk}\\[2.5ex]
M. Gabbay\\
Cambridge University, UK.\\
{\tt mg639@cam.ac.uk}\\
%{\small Paper 541.  May 2015}
}
\maketitle

\begin{abstract}
We translate the argumentation networks $\CA=(S, R)$ into a theory $\DA$ of intuitionistic logic, retaining $S$ as the domain and using intuitionistic negation to model the attack $R$ in $\CA$: the attack $xRy$ is translated to $x\to\neg y$. The intuitionistic models of $\DA$ characterise the complete extensions of $\CA$.

The reduction of argumentation networks to intuitionistic logic yields, in addition to a representation theorem, some additional benefits: it allows us to give semantics to higher level attacks, where an attack ``$xRy$'' can itself attack another attack ``$uRv$''; one can make higher level meta-statements $W$ on $(S, R)$ and such meta-statements can attack and be attacked in the domain.
\end{abstract}

\section{Background and orientation}\label{sec1}
This paper is a continuation of \cite{541-1} but it is self-contained and is independent of \cite{541-1}, except that it expands the ideas of \cite{541-1}.

Given a finite abstract argumentation network $\CA=(S, R)$, where $S\neq \varnothing$ is the set of arguments and $R\subseteq S\times S$ is the attack relation, we would like to view the set $S$ as atomic propositions of the intuitionistic propositional calculus and translate the attack relation $xRy$ as $x\to \neg y$, where ``$\to$'' represents intuitionistic implication and ``$\neg$'' represents intuitionistic negation. For each $\CA$ we write a theory $\Delta_{\CA}$ such that all the complete extensions of $\CA$ correspond to all the intuitionistic models of $\Delta_{\CA}$.

The reduction of argumentation networks to intuitionistic logic yields, in addition to a representation theorem, some additional benefits.
\begin{itemize}
\item
It allows us to give semantics to higher level attacks, where an attack ``$xRy$'' can itself attack another attack ``$uRv$''.
\item
One can make higher level meta-statements $W$ on $(S, R)$ and such meta-statements can attack and be attacked.
\end{itemize}
%Given an argumentation network $\CA=(S, R)$, where $S$ is the set of arguments and $R\subseteq S\times S$ is the attack relation, we translate the network into a theory $\DA$ of intuitionistic logic and use intuitionistic negation to model the attack in $\CA$. The attack $xRy$ is translated to $x\to\neg y$. The intuitionistic models of $\DA$ characterise the complete extensions of $\CA$.
%
%The reduction of argumentation networks to intuitionistic logic yields, in addition to a representation theorem, some additional benefits.
%\begin{itemize}
%\item
%It allows us to give semantics to higher level attacks, where an attack ``$xRy$'' can itself attack another attack ``$uRv$''.
%\item
%One can make higher level meta-statements $W$ on $(S, R)$ and such meta-statements can attack and be attacked.
%\end{itemize}
%Such arguments are very common in natural reasoning. We call them ``would-would'nt'' arguments. For example, the head of the opposition party might strongly attack the policies of the current government. The head of the current government might defend his policies by saying ``well the leader of the opposition \textit{would} say all of that \textit{wouldn't} he, he always attacks everything I say, because he is the leader of the opposition''.\footnote{Thanks to L. Rivlin for pointing out this case.} Meta-level attacks such as this can be expressed and modelled in intuitionistic logic and semantics. We can then define complete extensions which include meta-level attacks.
For example we can attack an argument $a$ by saying that $a$ is a generic argument which attacks all other arguments $\{x\mid x\neq a\}$ and therefore $a$ should be out.

What we are saying is $W(a)$ where:
\[
W(a) = \forall x(x\neq a\rightarrow aRx)
\]
attacks $a$.

We shall use G\"{o}del's intuitionistic logic $\BG_3$, semantically defined by all intuitionistic Kripke models with just two linearly ordered worlds $t < s$ ($t$ the actual world and $s$ a possible world), with $<$ the intuitionistic accessibility relation. Appendix A describes the logic $\BG_3$ in detail.  It can be axiomatised.

We present the complete extensions of $\CA = (S, R)$, using the Caminada labelling approach, \cite{541-2}.  A Caminada labelling of $S$ is a function $\lambda: S \mapsto \{\mbox{in, out, und}\}$ such that the following holds

\begin{itemlist}{CCCC}
\item [(C1)] $\lambda (x)=$ in iff for all $y$ attacking $x$, $\lambda (y)=$ out.
\item [(C2)] $\lambda(x)=$ out iff for some $y$ attacking $x$, $\lambda (y)=$ in.
\item [(C3)] $\lambda (x)=$ und iff for all $y$ attacking $x$, $\lambda (y)\neq$ in, and for some $y_0$ attacking $x$, $\lambda (y_0)=$ und.\footnote{That is, $\lambda (x)=$ und iff neither $\lambda (x)=$ in nor $\lambda (x)=$ out.}
\item [(C4)] If $x$ is not attacked at all, then $\lambda (x)=$ in.\footnote{This condition follows from (C1), since the empty conjunction is considered $\top$.}
\end{itemlist}

Let us use the following notation for $\BG_3$.

For a proposition $e$, write $e=(\top, \top)$ to mean $t\vDash e$ and $s\vDash e$.  Write $e=(\bot,\top)$ to mean $t\not\vDash e$ and $s\vDash e$.  Write $e=(\bot,\bot)$ to mean $t\not\vDash e$ and $s\not\vDash e$.

Note that since $\{t < s\}$ is an intuitionistic model, the option $t\vDash e$ and $s\not\vDash e$ is not allowed.

We denote assignments $h$ of truth values to atoms in the model by $h(e)=(\top,\top)$ or $h(e)=(\bot,\top)$ or $h(e)=(\bot,\bot)$.  We also write $e=(\top,\top),(\bot,\top),(\bot,\bot)$ respectively, using abuse of notation, when the assignment is known.  So $t\vDash e$ means $t\vDash_h e$ and $s\vDash e$ means $s\vDash _h e$.

\section{Translation into $\BG_3$}\label{sec2}
Let $\CA=(S, R)$ be a finite argumentation network. Let $\BG_3$ be G\"{o}del's logic with two possible worlds $t < s$, with $t$ the actual world.  Let $S$ be considered as propositional atoms of $\BG_3$, and let \Bn\ be an additional new constant of $\BG_3$, whose truth condition is $(\bot, \top)$ meaning $t\not\vDash \Bn$ and $s\vDash \Bn$. We thus have that $t\not\vDash \Bn\vee\neg\Bn$ and $t\not\vDash \neg\neg\Bn\to \Bn$.

We shall use $\Bn$ in our translation and eliminate it in the next section.  \Bn\ can be characterised by the condition $\not\vDash \Bn\vee\neg\Bn$.  In other words , we take an axiom system for $\BG_3$ and add to the language the constant \Bn\ with the ``axiom" that  $\Bn \vee \neg  \Bn$  is false.

\begin{definition}\label{541-D1}
Let $\CA =(S, R)$ and \Bn\ be as described before.  Define the theory $\DA$ containing  the following formulas of $\BG_3$, for $x,y$ in $S$:

\begin{itemlist}{CCCC}
\item [(a1)] $\bigwedge_{x\in S} (x\to \Bn\vee\bigwedge_{yRx}\neg y)$
\item [(a2)] $\bigwedge _{x\in S}(\bigwedge_{yRx} \neg y \to \Bn\vee x)$
\item [(b1)] $\bigwedge_{x\in S}(\neg x\to \Bn\vee\bigvee_{yRx}y)$
\item [(b2)] $\bigwedge_{x\in S}(\bigvee_{yRx} y\to \neg x\vee\Bn)$
%\item [(c)] $((a1)\wedge (a2)\wedge(b1)\wedge(b2)\to \Bn)\to \Bn$
\end{itemlist}
\end{definition}

\begin{theorem}\label{541-T2}{\ }
\begin{enumerate}
\item Let $h$ be a model of $\Delta_{\CA}$.  Define a Caminda labelling $\lambda_h$ as follows:
\begin{itemize}
\item $\lambda_h(x)=$ in, if $h(x)=(\top,\top)$
\item $\lambda_h(x)=$ out, if $h(x)=(\bot,\bot)$
\item $\lambda_h(x)=$ und, if $h(x) =(\bot,\top)$.
\end{itemize}
Then $\lambda_h$ is a complete extension for $\CA$.
\item Let $\lambda$ be a complete extension of $\CA$.  Define $h_\lambda(x)$ for $x\in S$ by
\begin{itemize}
\item  $h_\lambda (x) =(\top,\top)$, if $\lambda (x) =$ in.
\item $h_\lambda (x) =(\bot,\bot)$, if $\lambda (x)=$ out
\item $h_\lambda (x) =(\bot,\top)$, if $\lambda (x)=$ und.
\end{itemize}
Then $h_\lambda$ is a model of $\Delta_{\CA}$.
\end{enumerate}
\end{theorem}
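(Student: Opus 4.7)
The plan is to exploit the two-world structure of $\BG_3$ together with the defining property $\Bn = (\bot,\top)$. Since each of the four axioms (a1)--(b2) has $\Bn$ appearing as a disjunct of its consequent, and since $s \vDash \Bn$, every axiom is automatically satisfied at the future world $s$; all the nontrivial content is therefore concentrated at the actual world $t$, where $t \not\vDash \Bn$. As preparation I would compute the value of $\neg y$ for each of the three admissible persistent valuations of $y$: the intuitionistic clause $w \vDash \neg y$ iff $y$ is false at every world $\geq w$ forces $\neg y = (\top,\top)$ when $y = (\bot,\bot)$ and $\neg y = (\bot,\bot)$ otherwise. In short, at $t$ we have $t \vDash y$ iff $\lambda(y)=$~in, and $t \vDash \neg y$ iff $\lambda(y)=$~out.

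\textbf{Part 1.} Given a model $h$ of $\DA$, I would read off the content of the four axioms at $t$ using the translation above. Axiom (a1) evaluated at $t$ says: if $\lambda_h(x)=$~in then every attacker of $x$ is out, which is the forward direction of (C1). Axiom (a2) at $t$ gives the reverse direction of (C1). Symmetrically, (b1) and (b2) at $t$ yield the two directions of (C2), noting that $t \vDash \bigvee_{yRx}y$ iff some attacker of $x$ has $\lambda_h(y)=$~in. Condition (C3) then follows from (C1) and (C2) together with the partition of $S$ induced by the three persistent truth-value patterns of $h(x)$, exactly as recorded in the footnote to (C3).

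\textbf{Part 2.} Conversely, given a complete extension $\lambda$, I would define $h_\lambda$ as stated and verify each of the four axioms at both worlds. At $s$ satisfaction is immediate from $s \vDash \Bn$. At $t$ the four axioms unfold, via the translations computed above, into the four implications that make up (C1) and (C2), each of which holds by the assumption on $\lambda$. Persistence of $h_\lambda$ is automatic, since all three admissible valuations $(\top,\top), (\bot,\top), (\bot,\bot)$ are persistent from $t$ to $s$.

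\textbf{Expected difficulty.} There is no deep obstacle; the argument is careful bookkeeping. The one conceptual point is the matching of the three Caminada labels \emph{in, out, und} to the three admissible persistent valuations of an atom in a two-world Kripke model, with the ``und'' label matching the distinctive truth-value pattern $(\bot,\top)$ of $\Bn$ itself. Once this correspondence is set, the axioms (a1)--(b2) literally spell out the two directions of (C1) and (C2) at the actual world $t$, and the theorem's two inverse maps $h \mapsto \lambda_h$ and $\lambda \mapsto h_\lambda$ become evidently mutually inverse on the relevant classes.
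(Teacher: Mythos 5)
Your proposal is correct and follows essentially the same route as the paper: localise all content to the actual world $t$ via $s\vDash\Bn$, identify $t\vDash y$ with ``in'' and $t\vDash\neg y$ with ``out'', and read (a1)--(b2) at $t$ as the two directions of (C1) and (C2). The only divergence is that you obtain (C3) as a consequence of (C1), (C2) and the three-way partition of persistent values (a shortcut the paper's own footnote to (C3) licenses), whereas the paper verifies (C3) directly from the four axioms; both are sound.
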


\begin{proof}{\ }
\paragraph{Part 1.}  Assume $h$ is a model of $\Delta_{\CA}$.  We show $\lambda_h$ satisfies the conditions of Caminada labelling.
\paragraph{Condition C1.}  Assume $\lambda_h(x)=$ in. Then $x=(\top,\top)$.  From axiom (a1) we get, since $t\vDash x$ and $t\not\vDash \Bn$ that $t\vDash \bigwedge_{yRx}\neg y$ and so $h(y)=(\bot,\bot)$ for all attackers $y$ of $x$, i.e.\ $\lambda_h(y)=$ out for all attackers of $x$.

Assume $\lambda_h(x)=$ out for all attackers of $x$. Then $h(\bigwedge_{yRx}\neg y)=(\top,\top)$. From axiom (a2) we get $t\vDash x$ and so $h(x) =(\top,\top)$ and so $\lambda_h(x)=$ in.

\paragraph{Condition C2.}  Assume $\lambda_h(x)=$ out. Then $x=(\bot,\bot)$. Then from axiom (b1) we have $t\vDash \bigvee_{yRx} y$. Let $y_0$ be such that  $t\vDash y_0$.  Then $y_0=(\top,\top)$ and so $\lambda_h(y)=$ in.

Assume for some $y_0$ we have $\lambda_h(y_0)=$ in. Then $y_0=(\top,\top)$ and so from axiom (b2) we get that $t\vDash \neg x\vee\Bn$. Therefore $t\vDash \neg x$ and so $x=(\bot,\bot)$ and so $\lambda_h(x)=$ out.

\paragraph{Condition C3.}  We have $\lambda_h(x)=$ und iff by definition $h(x)=(\bot,\top)$.
 %By axiom (c) we must have at $t$ that $t\not\vDash (a1)\wedge (a2) \wedge (b1)\wedge(b2) \to \Bn$. So all the conjuncts must be true at $h$.
%\[
%t\vDash (a1)\wedge (a2)\wedge(b1)\wedge (b2).
%\]
From (b2) we have that%Consider the conjunct
\[
t\vDash (\bigvee_{yRx} y\to \neg x \vee \Bn).
\]
But at $t$ we have that $t\not\vDash \neg x\vee\Bn$. So $t\not\vDash \bigvee_{yRx} y$. So all attackers $y$ are false at $t$. So their value can be either  $(\bot,\bot)$ (i.e.\ out) or $(\bot,\top)$ (i.e.\ undecided) but not $(\top,\top)$.

Assume $x=(\bot,\top)$.  From (a2) we have that%Consider the conjunct
\[
t\vDash (\bigwedge_{xRy} \neg y \to \Bn\vee x).
\]
Since $t\not\vDash \Bn\vee x$ we have $t\not\vDash \bigwedge_{yRx}\neg y$.  Thus there exists at least one $y_0$ such that $t\not\vDash \neg y_0$.  So either $y_0= (\top,\top)$ or $(\bot,\top)$. By what we have shown $y_0$ must be $(\bot,\top)$, i.e.\ $y_0$ is undecided (und).

Assume now that all attackers of $x$ are either out or und and one attacker $y_0$ is und. Thus we have all $y$ attacking $x$ are either $(\bot,\bot)$ or $(\bot,\top)$ and one $y_0=(\bot,\top)$.

From (b1) we have%Consider the conjunct
\[
t\vDash \neg x\to \Bn\vee\bigvee_{yRx} y.
\]
Since $t\not\vDash \Bn \bigvee_{yRx} y$ we must have $t\not\vDash \neg x$. So $x=(\top,\top)$ or $(\bot,\top)$.

Finally from (a1) we have%Consider the conjunct
\[
t\vDash x \to \Bn\vee\bigwedge_{yRx} \neg y.
\]
Since $y_0=(\bot,\top)$ we get $t\not\vDash \Bn\vee\bigwedge_{yRx} \neg y$ and so $t\not\vDash x$.
Therefore $x=(\bot,\top)$, i.e.\ $x$ is und.

\paragraph{Part 2.}  Assume $\lambda$ is a Caminada labelling satisfying (C1), (C2) and (C3). We show that $h_\lambda$ is a model of $\Delta_{\CA}$.  %First show that
 Since $s\vDash \Bn$, the axioms clearly hold at $s$. Therefore need only show the the axioms of $\Delta_{\CA}$ holds at $t$, this is made simpler by the fact that $t\nvDash\Bn$.

\paragraph{Axiom (a1).}  Show if $t\vDash x$ then $\vDash \bigwedge_{yRx} \neg y$.  This follows from condition (C1).  $t\vDash x$ means $\lambda (x)=$ in. So $\lambda (y)=$ out for all $yRx$ so $t\vDash \bigwedge_{yRx} \neg y$.

\paragraph{Axiom (a2).}  Show that if $t\vDash \bigwedge_{yRx} \neg y$ then $t\vDash x$. This again follows from (C1).

\paragraph{Axiom (b1).}  Show that if $t\vDash \neg x$ then $t\vDash \bigvee_{yRx} y$. %or $t\vDash \neg x$.

This follows from (C2).  If $t\vDash \bigvee_{yRx} y$ then $t\vDash y_0$ for some $y_0$.  So $h_\lambda (y_0=(\top,\top)$. So $\lambda (y_0)=$ in. So $\lambda (x)=$ out. So $h_\lambda (x) =(\bot,\bot)$ so $t\vDash \neg x$

\paragraph{Axiom (b2).}  Show that if $t\vDash \bigvee_{yRx} y$ then $t\vDash \neg x$. This again follows from (C2).
%\paragraph{Show axiom ( c).}  We need to show that the conjunction $(a1)\wedge (a2)\wedge(b1)\wedge (b2)$ holds at $t$, but we had already done that.
\end{proof}

\section{Refinements and remarks}

\begin{remark}\label{541-R3}
We used the constant \Bn. We now try to eliminate it.  Let $\CA=(S, R)$ be given.  Define \Bn\ to be
\[
\Bn =\bigwedge_{x\in S} (x\vee\neg x).
\]
Assume for the moment that we are dealing with extensions which are {\em not} stable. So there exist undecided elements in the extension $\lambda$, i.e.\ there exists some $x$ such that $\lambda(x)=$ und and so $h(x) =(\bot,\top)$. So $x\vee\neg x$ is false at $t$ and true in $s$. So \Bn\ is false at $t$ and true in $s$.  So we found the \Bn\ we need for our translation.

We now offer a translation $\Theta^\lambda_{\CA}$ of an intuitionistic theory corresponding to any labelling $\lambda$. The translation is done by case analysis on $\lambda$.

\begin{enumerate}
\item Case $\lambda$ does not give value und., i.e.\ $\lambda(x) $ is either in or out.  This means that $\lambda$ represents a stable extension.
Let $\Theta^0_{\CA} = \bigwedge_x (x\leftrightarrow \bigwedge_{yRx} \neg y)$.
\item Case $\lambda$ gives value undecided. Let $\Theta^1_{\CA}=\Delta_{\CA}$ where the  \Bn\  which is used in $\Delta_{\CA}$ is $\Bn=\bigwedge_{x\in S} (x\vee\neg x)$.
\end{enumerate}

 Define $h_\lambda$ as in Section 2. We have that $\lambda$ is a complete extension iff $\Theta_{\CA}$ holds under $h_\lambda$.

The translation and the two case analyses described above  also work in the other direction.   Let $h$ be a model. Define $\lambda_h$ as in Section 2. We distinguish two cases.
\paragraph{Case $t\vDash_h\Bn$.}  In this case we look at $\Theta^0_{\CA}$ and indeed, $t\vDash \Theta^0_{\CA}$ off $\lambda_h$ is a stable complete extension.
\paragraph{Case $t\not\vDash_h\Bn$.}  In this case look at $\Theta^1_{\CA} =\Delta_{\CA}$.  Indeed $\lambda_h$ is a non-stabe complete extension iff $t\vDash_h\Delta_{\CA}$.
%\end{enumerate}
\end{remark}

\begin{remark}\label{541-R4}
The perceptive reader might argue as follows:
\begin{quote}
``What is the big deal about interpreting abstract argumentation in intuitionistic logic $\BG_3$?  There are many such translations into many diverse systems.  Your own paper \cite{541-1} gave another interpretation in classical logic.
Grossi has one in modal logic, there are many others as indeed quoted in \cite{541-1}. What is really going on here is that the Caminada labelling requires 3 values, in, out and undecided. Any logic which can isolate 3 values for ``in'', ``out'' and ``undecided'' can do the job!?  So what is the big deal here?"
\end{quote}  The answer to the question of what is the  ``big deal" or added value or difference between translations  lies in allowing for different possibilities for generalisations, extensions, variations, instantiations, or any other operations we do on argumentation networks.  Different translations would react differently in different environments and interpretations.

Let us illustrate by an example. Take the two very similar translations proposed in our own papers, one is the translation  into classical logic of \cite{541-1} and the other is the present translation into intuitionistic logic .

We will show a substantially different behaviour in the context of instantiation. Instantiation is important, as any supporter of ASPIC (\cite{541-15,541-16,541-17,541-18}) will tell you. Let us take a very simple network with  $S =\{a,b,x\}$ and $R =\{ (a,b) , (b,a), (a,x)\}$. Let us instantiate $x$ with  $x= p \vee \neg p$. Consider Figure \ref{541-F5}

\begin{figure}
\centering
\input{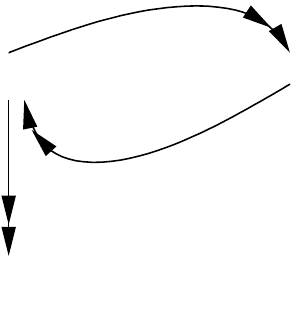_t}
\caption{}\label{541-F5}
\end{figure}

In this figure, $x$ is instantiated by the wff $p\vee\neg p$.   When we translate into classical logic, $p\vee\neg p$ is $\top$ and so there can be only one extension.

\[
x =\top =\mbox{ in}, b=\mbox{ out}, a=\mbox{ in}.
\]

In intutionistic logic, $\BG_3, p\vee\neg p$ can be $(\top,\top)=$ in or $(\bot,\top)=$ und.  So the possible extensions are
\begin{center}
$a=$ in, $b=$ out, $x=\top=$in, \\
$b=$ in, $a=$ out, $x=$ out, \\
$a=b=x=$ und.
\end{center}
For further benefits see the next Remark~\ref{541-R6} and Sections~\ref{sec4} and~\ref{sec5} below.
\end{remark}

\begin{remark}\label{541-R6}
We assumed that $(S, R)$ is finite. this is not necessary. We can assume that $(S, R)$ is finitary, i.e.\ each $x$ has a finite number of attackers. This will allow us to write $\Delta_{\CA}$ as an infinite theory containing wffs for each $x\in S$.

The correspondence between extensions $\lambda$ and models $h$ will still work.

The challenge is when we have nodes with an infinite number of attackers. We cannot write the expression like this: $\bigwedge_{yRx}\neg y$.  It is not finite.  For this we need a predicate theory talking about domains $S$ with a binary relation $xRy$ on $S$ and a unary predicate $\In(x)$ on $S$ and write
\[
\forall y (yRx\to \neg\In(y)).
\]
The beauty of this language is that we can now also write
\[
\In(z) \to \neg xRy
\]
which means that $z$ attacks the attack from $x$ to $y$. This is a higher level attack. Furthermore, since $xRy$ is a formula of the logic, it can get value  $(\top,\top)$, i.e. it is in for both worlds; or $(\bot,\top)$, i.e. it is undecided (out in the first but in in the second); or $(\bot, \bot)$, i.e. it is out for both worlds.
The next section will do this systematically.

Furthermore, since $xRy$ is a formula of the language, it can also go on and attack, so we can write an attack attacking another attack:
\[
xRy \to\neg uRv
\]
and an attack attacking another node:
\[
xRy \to\neg \In(z)
\]
\end{remark}

The reader might ask, if we have the relation $R$ in the language, should we not have the following equivalence?
\[
xRy \leftrightarrow (\In(x) \to \neg \In(y))
\]
The answer is negative. $xRy$ represents a geometrical relation in  the network. It has a role in defining the attack conditions and the defining clauses for  $\{$in, out, und$\}$ values. The formula $\In(x) \to \neg \In(y)$ also participates in these defining clauses, but describes the properties of the geometrical relation as a whole. So for example we do want that
\[
xRy \to (\In(x) \to \neg \In(y))
\]
but the converse does not necessarily hold. For example suppose $(S,R)$ is as in Figure~\ref{541-mike1}. Then $\In(a) \to \neg \In(d)$ but it is not the case that $aRd$.
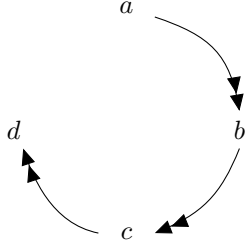
\begin{figure}[ht]
\centering
\ifx\JPicScale\undefined\def\JPicScale{1}\fi
\unitlength \JPicScale mm
\begin{tikzpicture}[x=\unitlength,y=\unitlength,inner sep=0pt]
\draw (15,30) node {$a$};
\draw (30,13.75) node {$b$};
\draw (0,13.75) node {$d$};
\draw [->>,>=triangle 45](18.75,28.75) .. controls (26.25,26.25) and (28.75,23.75) .. (30,16.25);
%\draw [<<-,>=triangle 45](12.5,28.75) .. controls (5,25) and (2.5,21.25) .. (1.25,16.25);
\draw (15,0) node {$c$};
\draw [<<-,>=triangle 45](1.25,11.25) .. controls (2.5,7.5) and (5,1.25) .. (11.25,0);
\draw [<<-,>=triangle 45](18.75,0) .. controls (25,2.5) and (27.5,5) .. (30,11.25);
\end{tikzpicture}
\caption{If $a$ is in then $b$ is out so $c$ is in and $d$ is out, but $a$ does not attack $d$.}\label{541-mike1}
\end{figure}

\section{The attack in Predicate Intuitionistic Logic.}\label{sec4}

This section informally develops the ideas put forward in Remark~\ref{541-R6}. The formal development will be in the next Section~\ref{sec5}. We adopt the language of intuitionistic predicate logic with the usual connectives and quantifiers $\{\land,\lor,\to,\neg,\bot,\top,\forall,\exists\}$ and the axioms for constant domains, namely:
\[
\forall x(A\lor B(x))\to A\lor\forall x B(x)
\]
where $x$ is not free in $A$.

Our logic, which we call $\pG$, is defined semantically using a two world predicate Kripke model of constant domains. The worlds are as before $\{t,s\}$, with $t< s$ and $t$ the actual world. We denote the domain of these worlds by $S$. This will also be our set of arguments. We allow two non-logical predicates in this language, a binary predicate $xRy$ and a unary predicate $\In(z)$. $R$ represents the attack relation and $\In$ represents being ``in'' the extension defined by the model.

Being an intuitionistic language, any formula $A$ has three options for values in the two world model:
\begin{itemize}
\item
$t\nvDash A$ and $s\nvDash A$
\item
$t\nvDash A$ and $s\vDash A$
\item
$t\vDash A$ and $s\vDash A$
\end{itemize}
We can write the values as $A=(\bot,\bot)$, $A=(\bot,\top)$ and $A=(\top,\top)$.

For the intuitionistic logic and semantics see the Appendix.
\begin{remark}\label{541-4R1}
This remark explains how the language with the predicates $\In(z)$ and $xRy$ can deal with infinite networks.

Let $(S,R)$ be an arbitrary network. So nodes $x\in S$ may have an infinite number of attackers. We want to use predicate logic to define the analogous formulas to those of Definition~\ref{541-D1}.

We let $\bn$ be a constant as before whose truth value is $(\bot,\top)$. We now define the analogous wffs $(A1)$, $(A2)$, $(B1)$, $(B2)$. We take a model with domain $S$. We assume of this model that it has a constant domain and we assume that $R$ is decided, namely:
\[
\forall xy(xRy\lor\neg xRy)
\]
The attack relation $R$ is expressible in the model because $R$ is in the language. We write:
\[
\forall x\bigl[\In(x)\to\bigl(\bn\lor\forall y(yRx\to\neg \In(y))\bigr)
\bigr] \tag{A1}
\]
\[
\forall x\bigl[ \forall y\bigl(yRx\to\neg\In(y)\bigr)\to(\bn\lor\In(x))
\bigr] \tag{A2}
\]
\[
\forall x\bigl[\neg \In(x)\to\bigl(\bn\lor\exists y(yRx\land \In(y))\bigr)
\bigr] \tag{B1}
\]
\[
\forall x\bigl[ \exists y(yRx\land\In(y))\to(\bn\lor\neg \In(x))
\bigr] \tag{B2}
\]
%\[
%\bigl((A1)\land(A2)\land(B1)\land(B2)\to\bn\bigr)\to\bn \tag{C}
%\]
Then we get the following theorem~\ref{541-4T2}.
\end{remark}

\begin{theorem}\label{541-4T2}
\begin{enumerate}
\item
Let $\bm$ be a model of $\DA$. Define $\lbm$ as follows:
\begin{itemize}
\item
$\lbm(x)=\text{in}$, if $\bm(\In(x))=(\top,\top)$
\item
$\lbm(x)=\text{out}$, if $\bm(\In(x))=(\bot,\bot)$
\item
$\lbm(x)=\text{und}$, if $\bm(\In(x))=(\bot,\top)$.
\end{itemize}
Then $\lbm$ is a complete extension of $(S,R)$.
\item
Let $\lambda$ be a complete extension of $(S,R)$. Define a model $\bm$ with domain $S$ and a decided relation $R$. Define the values for $\In(x)$, $x\in S$ as follows:
\begin{itemize}
\item
$\In(x)=(\top,\top)$, if $\lambda(x)=\text{in}$
\item
$\In(x)=(\bot,\bot)$, if $\lambda(x)=\text{out}$
\item
$\In(x)=(\bot,\top)$, if $\lambda(x)=\text{und}$
\end{itemize}
Then $\bm$ is a model of $\DA$.
\end{enumerate}
\end{theorem}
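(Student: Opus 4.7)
The plan is to mirror the proof of Theorem~\ref{541-T2} exactly, replacing the finite conjunctions $\bigwedge_{yRx}$ and disjunctions $\bigvee_{yRx}$ of Definition~\ref{541-D1} by the bounded quantifications $\forall y(yRx\to\ldots)$ and $\exists y(yRx\land\ldots)$. The decidedness of $R$ together with the constant-domain assumption makes this substitution semantically seamless in the two-world Kripke model: at $t$, the formula $\forall y(yRx\to\neg\In(y))$ holds iff $\In(y)=(\bot,\bot)$ for every $y\in S$ with $yRx$, and $\exists y(yRx\land\In(y))$ holds iff some such $y$ satisfies $\In(y)=(\top,\top)$. Persistence guarantees that $\In(x)$ takes only one of the three values $(\top,\top),(\bot,\top),(\bot,\bot)$, so $\lbm$ is well-defined and exhaustive.

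For Part 1, I would verify the Caminada conditions (C1) and (C2) for $\lbm$; condition (C3) then follows automatically because $\lbm$ is three-valued and (C3) is logically equivalent to the conjunction of the negations of (C1)-in and (C2)-out. For the ``$\Rightarrow$'' direction of (C1), I use (A1): $\In(x)=(\top,\top)$ yields $t\vDash\In(x)$, and since $t\nvDash\bn$ the consequent forces the universal, so every attacker is out. For the ``$\Leftarrow$'' direction I use (A2), noting that $t\vDash\In(x)$ then upgrades to $\In(x)=(\top,\top)$ by persistence. Condition (C2) is dual, using (B1) and (B2), with $t\vDash\neg\In(x)$ forcing $\In(x)=(\bot,\bot)$.

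For Part 2, I verify that each axiom (A1)--(B2) holds in $\bm$. At $s$, every axiom is trivially satisfied because $s\vDash\bn$ supplies the required disjunct in each consequent, so it suffices to check them at $t$, where $\bn$ drops out. Each axiom at $t$ then reduces to the corresponding direction of (C1) or (C2), which holds by hypothesis on $\lambda$. For instance, for (A1) at $t$: assume $t\vDash\In(x)$, so $\lambda(x)=\text{in}$; by (C1) every attacker $y$ has $\lambda(y)=\text{out}$, hence $\In(y)=(\bot,\bot)$, which is exactly what $\forall y(yRx\to\neg\In(y))$ demands. The remaining three axioms are analogous.

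The main obstacle is not conceptual but bookkeeping, together with one mild subtlety in the intuitionistic semantics of quantifiers: were one to derive (C3) directly from (A2), extracting an undecided attacker from $t\nvDash\forall y(yRx\to\neg\In(y))$ would require chasing a witness across the two possible worlds. That subtlety is sidestepped by inferring (C3) from (C1) and (C2) as above. One also uses throughout that atomic $R$-formulas, being decided and persistent, take the same truth value at $t$ and at $s$, so reasoning about ``who attacks $x$'' can be carried out uniformly across the two-world model.
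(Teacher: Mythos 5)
Your proposal is correct and matches the paper's intent exactly: the paper's own proof of this theorem is simply the remark that it is ``parallel to the proof of Theorem~\ref{541-T2}'', and you carry out that parallel, correctly using decidedness of $R$ and constant domains to show that $\forall y(yRx\to\neg\In(y))$ and $\exists y(yRx\land\In(y))$ evaluate at $t$ exactly as the finite conjunctions and disjunctions of Definition~\ref{541-D1} do. Your one streamlining --- deriving (C3) from the biconditional forms of (C1) and (C2) rather than verifying it directly as the proof of Theorem~\ref{541-T2} does --- is legitimate (the paper's own footnote to (C3) licenses it) and sensibly avoids the witness-extraction subtlety you identify.
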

\begin{proof}
Parallel to the proof of Theorem~\ref{541-T2}.
\end{proof}

\begin{remark}\label{541-4R3}
The inclusion of $R$ in the predicate language gives us the opportunity to express higher level attacks not only on nodes and other attack arrows but also on meta-level statements. Consider the higher level network of Figure~\ref{541-4F4}. In this figure we have:
\[
\begin{array}{cl}
(i) & a\tO b \\
(ii) & a\tO c \\
(iii) & a\tO (c\tO d) \\
(iv) & (a\tO b)\tO d \\
(v) & c\tO d
\end{array}
\]
\begin{figure}
\centering
\ifx\JPicScale\undefined\def\JPicScale{1}\fi
\unitlength \JPicScale mm
\begin{tikzpicture}[x=\unitlength,y=\unitlength,inner sep=0pt]
\draw (0,30) node {$a$};
\draw (30,30) node {$b$};
\draw (0,0) node {$c$};
\draw (30,0) node {$d$};
\draw [->>,>=triangle 45](2.5,30) -- (27.5,30);
\draw [->>,>=triangle 45](0,27.5) -- (0,2.5);
\draw [->>,>=triangle 45](2.5,0) -- (27.5,0);
\draw [->>,>=triangle 45](1.25,28.75) -- (15,1.25);
\draw [->>,>=triangle 45](15,30) -- (28.75,2.5);
\end{tikzpicture}
\caption{}\label{541-4F4}
\end{figure}
$(i)$ and $(ii)$ and $(v)$ are ordinary attacks. $(iii)$ is a higher level attack first introduced in~\cite{541-14} and further studied extensively in~\cite{541-10},~\cite{541-11},~\cite{541-12} and~\cite{541-13}. $(iv)$ was introduced in~\cite{541-14} and has never had systematic semantics until the current paper.

The above attacks can be translated using $\In(z)$ and $xRy$ as follows: we need to give up the axiom $\forall xy(xRy\lor\neg xRy)$ and allow $xRy$ to get the undecided value $(\bot,\top)$. Because of the possibility that an attack arrow $xRy$ may be out or undecided we need to express explicitly in the clauses below that the relation holds, we can now write:
\[
\begin{array}{cl}
(i') & (\In(a)\land aRb)\to \neg\In(b) \\
(ii') & (\In(a)\land aRc)\to \neg\In(c)\\
(iii') & \In(a)\to \neg cRd \\
(iv') & aRb\to \neg\In(d) \\
(v') &  (\In(c) \land cRd) \to \neg\In(d)
\end{array}
\]
We note the important aspect of this intuitionistic modelling is that any wff can be subject to attack. For example take the meta-level statement
\[\exists y(dRy)\]
this is false in the network. In the syntax it can be attacked by any $z$:
\[
\In(z)\to\neg\exists y(dRy)
\]

We can extend clauses $(A1)$, $(A2)$, $(B1)$, $(B2)$ to encompass any attacks on a wff $W$ by replacing ``$\In(x)$'' by ``$W$'', and $\{\text{``}\In(y)\text{''}\mid yRx\}$ by the wffs $\{Y\mid Y\tO W\}$ (i.e. the set of $Y$ that attack $W$ according to our network).
\end{remark}

\begin{example}\label{541-4D5}
Let us continue and analyse the higher level network of Figure~\ref{541-4F4}.

The network can be described as $(S,\mathbb{R})$, where $S=\{a,b,c,d\}$ and the relation $\mathbb{R}\subseteq (S\cup S^2)^2$ is:
\[
\{ (a,b), (a,c), (c,d), (a,(c,d)), ((a,b),d)
\}
\]
corresponding to
\[
\{ a\tO b, a\tO c, c\tO d, a\tO(c\tO d), (a\tO b)\tO d\}
\]
Because we have an attack predicate $R$ in the intuitionistic language we can express $\mathbb{R}$ using $R$.

To do this effectively we need to define the set $\mathbb{W}$ of elements (units) which can participate in the higher level attacks. These are formulas of intuitionistic logic built up from Figure~\ref{541-4F4}. Let us list them:
\[
\begin{array}{cl}
\multicolumn{2}{l}{\text{The set $\mathbb{W}$:}} \\
(W_1) & \In(a) \\
(W_2) & \In(b) \\
(W_3) & \In(c) \\
(W_4) & \In(d) \\
(W_5) & aRb \\
(W_6) & aRc \\
(W_7) & cRd \\
\end{array}
\]
The attack relation $\mathbb{R}$ can now be expressed in the object level of the language with $\In$ and $R$.
\begin{table}{ht}
\[
\begin{array}{c|c}
\text{Attack} & \text{representation}\\ \hline
a\tO b & (\In(a)\land aRb)\to\neg\In(b) \\ \hline
a\tO c &  (\In(a)\land aRc)\to\neg\In(c) \\ \hline
c\tO d &  (\In(c)\land cRd)\to\neg\In(d) \\ \hline
a\tO(c\tO d) &  \In(a)\to\neg cRd \\ \hline
(a\tO b)\tO d & aRb\to\neg\In(d)\\ \hline
\end{array}
\]
\caption{}\label{541-4F6}
\end{table}
The set $\mathbb{W}$ with elements $(W_1)$-$(W_7)$ together with Table~\ref{541-4F6} gives rise to the argumentation network of Figure~\ref{541-4F7}.
\begin{figure}
\centering
\ifx\JPicScale\undefined\def\JPicScale{1}\fi
\unitlength \JPicScale mm
\begin{tikzpicture}[x=\unitlength,y=\unitlength,inner sep=0pt]
\draw (11.25,30) node {$W_1=\In(a)$};
\draw (11.25,0) node {$W_3=\In(c)$};
\draw (0,15) node {$W_6=aRc$};
\draw (51.25,30) node {$W_2=\In(b)$};
\draw (51.25,0) node {$W_4=\In(d)$};
\draw (26.25,15) node {$W_7=cRd$};
\draw (26.25,42.5) node {$W_5=aRb$};
\draw [->>,>=triangle 45](11.25,27.5) -- (11.25,3.75);
\draw [->>,>=triangle 45](27.5,40) -- (48.75,3.75);
\draw [->>,>=triangle 45](12.5,27.5) -- (23.75,18.75);
\draw [->>,>=triangle 45](21.25,30) -- (41.25,30);
\draw [->>,>=triangle 45](21.25,0) -- (41.25,0);
\end{tikzpicture}
\caption{}\label{541-4F7}
\end{figure}
We need axioms to make this work. Axioms ($a1$), ($a2$), ($b1$) and ($b2$) now use $\mathbb{W}$ and Table~\ref{541-4F6}. They become the following ($\text{\it fa}1$), ($\text{\it fa}2$), ($\text{\it fb}1$), ($\text{\it fb}2$) and ($\text{\it fc}$). The construction uses Figure~\ref{541-4F7} and the attack relation of Table~\ref{541-4F6}, which we can call $(\mathbb{W},\mathbb{R}_1)$. We are now in the situation of Section~2 Definition~\ref{541-D1}.
\end{example}

The meaning of the clauses of Definition~\ref{541-D1} in terms of attacks and objects participating in attacks are as follows.

\begin{table}[ht]
\[
\begin{array}{@{\text{meaning of }}l@{:\quad}l}
a1 &\bigwedge_\text{all objects $W$}\bigl(W\text{ is in}\to(\bn\lor\bigwedge_\text{attackers $Y$ of $W$} Y\text{ is out})\bigr)
\\[0.3cm]
a2&
\bigwedge_\text{all objects $W$}\bigl(\bigwedge_\text{attackers $Y$ of $W$}\text{$Y$ is out}\to(\bn\lor W\text{ is in})\bigr)
\\[0.3cm]
b1 &
\bigwedge_\text{all objects $W$}\bigl(W\text{ is out}\to(\bn\lor\bigvee_\text{attackers $Y$ of $W$} Y\text{ is in})\bigr)
\\[0.3cm]
b2 &
\bigwedge_\text{all objects $W$}\bigl(\bigvee_\text{attackers $Y$ of $W$}\text{$Y$ is in}\to(\bn\lor W\text{ is out})\bigr)
\end{array}
\]
\caption{}\label{541-MT1}
\end{table}
If we follow the above meaning and use $(\mathbb{W},\mathbb{R}_1)$, we get the following:
\[
\begin{array}{@{\land}l}
\multicolumn{1}{l}{\bigl(\In(b)\to(\bn\lor\neg\In(a)\lor\neg aRb)\bigr)} \\
\bigl(\In(c)\to(\bn\lor\neg\In(a)\lor\neg aRc)\bigr) \\
\bigl(\In(d)\to(\bn\lor\neg\In(c)\lor\neg cRd)\bigr) \\
\bigl(cRd\to(\bn\lor\neg\In(a))\bigr) \\
\bigl(\In(d)\to(\bn\lor\neg aRb)\bigr) \\
\end{array}
\tag{\text{fa1}}
\]

\[
\begin{array}{@{\land}l}
\multicolumn{1}{l}{\bigl((\neg\In(a)\lor\neg aRb)\to(\bn\lor\In(b))\bigr)} \\
\bigl((\neg\In(a)\lor\neg aRc)\to(\bn\lor\In(c))\bigr) \\
\bigl((\neg\In(c)\lor\neg cRd)\to(\bn\lor\In(d))\bigr) \\
\bigl(\neg\In(a)\to(\bn\lor cRd)\bigr) \\
\bigl(\neg aRb\to(\bn\lor \In(d))\bigr) \\
\end{array}
\tag{\text{fa2}}
\]

\[
\begin{array}{@{\land}l}
\multicolumn{1}{l}{\bigl(\neg\In(b)\to(\bn\lor(\In(a)\land aRb))\bigr)} \\
\bigl(\neg\In(c)\to(\bn\lor(\In(a)\land aRc))\bigr) \\
\bigl(\neg\In(d)\to(\bn\lor(\In(c)\land cRd))\bigr) \\
\bigl(\neg cRd\to(\bn\lor\In(a))\bigr) \\
\bigl(\neg \In(d)\to(\bn\lor aRb)\bigr) \\
\end{array}
\tag{\text{fb1}}
\]

\[
\begin{array}{@{\land}l}
\multicolumn{1}{l}{\bigl((\In(a)\land aRb)\to(\bn\lor\neg\In(b))\bigr)} \\
\bigl((\In(a)\land aRc)\to(\bn\lor\neg\In(c))\bigr) \\
\bigl((\In(a)\land cRd)\to(\bn\lor\neg\In(d))\bigr) \\
\bigl(\In(a)\to(\bn\lor \neg cRd)\bigr) \\
\bigl( aRb\to(\bn\lor \neg \In(d))\bigr) \\
\end{array}
\tag{\text{fb2}}
\]

The reader will notice that we actually used conjunctive attacks in this example. Figure~\ref{541-4F7} really should be Figure~\ref{541-4F7a}. We shall discuss this formally in the next section~\ref{sec5}.
\begin{figure}
\centering

\ifx\JPicScale\undefined\def\JPicScale{1}\fi
\unitlength \JPicScale mm
\begin{tikzpicture}[x=\unitlength,y=\unitlength,inner sep=0pt]
\draw (12.5,30) node {$W_1$};
\draw [->>,>=triangle 45](12.5,27.5) -- (12.5,3.75);
\draw [->>,>=triangle 45](31.25,38.75) -- (50,3.75);
\draw [->>,>=triangle 45](13.75,27.5) -- (26.25,17.5);
\draw [->>,>=triangle 45](17.5,30) -- (47.5,30);
\draw [->>,>=triangle 45](17.5,0) -- (47.5,0);
\draw (28.75,41.25) node {$W_5$};
\draw (51.25,30) node {$W_2$};
\draw (51.25,0) node {$W_4$};
\draw (28.75,15) node {$W_7$};
\draw (12.5,0) node {$W_3$};
\draw (0,17.5) node {$W_6$};
\draw (3.75,15) -- (12.5,10);
\draw (28.75,38.75) -- (32.5,30);
\draw (28.75,12.5) -- (35,0);
\draw (6.25,13.75) .. controls (7.5,16.25) and (10,17.5) .. (12.5,16.25);
\draw (10,13.75) node {$\land$};
\draw (26.25,30) .. controls (25,32.5) and (27.5,36.25) .. (30,35);
\draw (28.75,32.5) node {$\land$};
\draw (28.75,0) .. controls (27.5,2.5) and (30,6.25) .. (32.5,5);
\draw (31.25,2.5) node {$\land$};
\end{tikzpicture}
\caption{Our notation for joint attacks is as in Figure~\ref{541-4F8}}\label{541-4F7a}
\end{figure}
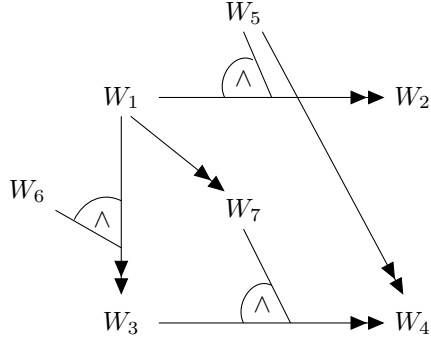

\begin{figure}
\centering
\ifx\JPicScale\undefined\def\JPicScale{1}\fi
\unitlength \JPicScale mm
\begin{tikzpicture}[x=\unitlength,y=\unitlength,inner sep=0pt]
\draw (0,30) node {$z_1,$};
\draw (15,30) node {$\dots,$};
\draw (30,30) node {$z_n$};
\draw (0,27.5) -- (15,15);
\draw (30,27.5) -- (15,15);
\draw [->>,>=triangle 45](15,15) -- (15,2);
\draw (15,0) node {$x$};
\draw (10,19) .. controls (14,22) and (16,22) .. (20,19);
\draw (15,19) node {$\land$};
\end{tikzpicture}
\caption{The reading is $\bigwedge_i\In(z_i)\to\neg\In(x)$}\label{541-4F8}
\end{figure}
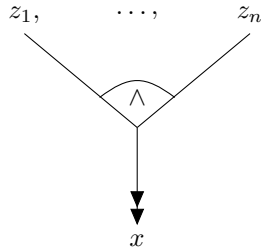

\section{Formal development of the intuitionistic model}\label{sec5}

We begin with further motivation for our meta-level view of argumentation networks. Let us start with a traditional network $(S,R)$ with $R\subseteq S\times S$, and let us focus on the following two types of activities associated with it.

\begin{enumerate}
\item {\bf Instantiation}

We instantiate elements $x$ of $S$ by arguments from some logical system $\BL$. This case is the ASPIC type of instantiation (see~\cite{541-15,541-16,541-17,541-18}). In all of these instantiations,  $\BL$ is {\em not} based on the language of $(S,R)$ itself. To be quite clear, we deal with wffs which talk about some application area where the arguments of $S$ come from, and are not about the geometry of $(S,R)$ itself.

There are arguments, however, which use the geometry of $(S,R)$. Such arguments are, for example, the formula $W(a)$, mentioned in section~\ref{sec1} and other arguments like
\begin{enumerate}
\item[$J(a,b):$] The arguments $a$ and $b$ attack the same arguments.
\end{enumerate}
Obviously they are not independent (think of two witnesses co-ordinating their testimony).
\[
J(a,b) = \forall x(aRx\leftrightarrow bRx)
\]
\item {\bf Abduction}

Abduction in argumentation is a meta-level operation on $(S,R)$ which can take two forms:
\begin{enumerate}
\item
\textit{Observation}: for example, we can ask if  there are extensions in which $x$ is in or $x$ is out, etc.
\item
\textit{Intervention}: for example, the instruction to delete $x$ from $S$ or to ensure $x$ is out (or in), or the instruction to delete (or ignore) all self-attacking arguments, or the like.
\end{enumerate}
\end{enumerate}
Observation is easy to implement in our intuitionistic framework. Given $\CA=(S,R)$, if we want to observe whether $\phi$ holds, we look for models of $\DA\cup\{\phi\}$. So if $x\in S$, we can take $\phi$ to be $\In(x)$ or $\neg\In(x)$, and so on. Since every model of $\DA$ is an extension, then the models of $\DA\cup\{\phi\}$ would yield extensions satisfying $\phi$, if they exist.

Intervention is more of a problem. How do we intervene and delete, or ignore, an argument $a\in S$? We can add to $S$ a new annihilator node $x_a$ with $x_aRa$. This will ensure that $a$ is out in $(S\cup\{x_a\},R\cup\{(x_a,a)\})$.

We may, however, want to intervene in a more subtle way. Say we want to delete every element which attacks every other element. We can do that by identifying such elements $a$ and adding an annihilator $x_a$ for each such $a$. This is not subtle at all. It does not give $a$ a chance to protect or defend itself. To make this case more concrete, suppose $a$ is an argument which attacks everything opposed to some ``Holy Book''. We might want to intervene and take $a$ out because we do not want such considerations to be involved. However, there may be arguments against such intervention. We may not want to conduct the discussion of whether to intervene or not in the meta-level. Furthermore, the discussion may involve object level considerations as well as meta-level considerations. For example, $a$ itself may contain counter argument intended to forestall intervention, or there may be an argument that if we were to take $a$ out we might get the wrong extensions, etc.

So how do we allow $(S,R)$ to include such arguments and let them attack? In the above example we must let $x_a$ be something like the formula below , which actually states the reason why we want to intervene and take $a$ out:
\[
x_a = \forall y(y\neq a\to aRy)
\]
and add $x_a$ to $S$ with $x_aRa$. In that case, we can allow $a$ to protect itself by saying $J(a)$:
\[
J(a) = \forall x(aRx\leftrightarrow xRx)
\]
i.e. $a$ defends itself by saying that it attacks self-attacking arguments (e.g. because the Holy Book is not tolerant of them).

The above discussion favours argumentation networks containing formulas of the predicate logic of $R$. We thus have as arguments the set $(S\cup W, R)$ with $R\subseteq(S\cup W)^2$ and $W$ is a set of wffs in the logic $\BLR$ based on the binary relation $R$.

If we think more along these lines, we see that we have a challenge for ASPIC itself. We know it is perfectly reasonable to discuss what arguments to allow into $(S,R)$. So let us do the ASPIC approach and use this discussion as the generator of the arguments for $(S,R)$. We thus have a loop here, how do we give semantics to the process?

Our intuitionistic logic can do this easily. The wffs of $\BLR$ are available in the language.  So:
\begin{itemize}
\item
If $\phi$ attacks $\psi$ we write $\phi\to\neg\psi$
\item
If $\phi$ attacks $x$ we write $\phi\to\neg\In(x)$
\item
If $x$ attacks $y$ we write $(\In(x)\land xRy)\to\neg\In(y)$
\item
If $x$ attacks $\phi$ we write $\In(x)\to\neg\phi$.
\end{itemize}
$\phi$ and $\psi$ can also be $xRy$ or $uRv$ and thus we can also have higher level attacks.

\begin{example}\label{541-EAug17-0}
Let us give a natural example of a network where a formula $\phi(R)$ attacks other arguments. Consider a network $(S,R)$ where $S=\upa{a,x,y,z}$ and where each of $x,y,z$ attacks $a$ (Figure~\ref{541-FAug17-1}). In this Figure, $a$ is attacked by all other arguments in the network (i.e. by everything else in $S$). This state of affairs is described by $\phi$ which can be used to protect $a$ (extending $R$). In this case $\phi$ can be seen as discrediting attacks as being part of an unfair, universal, attack on $a$.

\begin{figure}[ht]
\centering
\ifx\JPicScale\undefined\def\JPicScale{1}\fi
\unitlength \JPicScale mm
\begin{tikzpicture}[x=\unitlength,y=\unitlength,inner sep=0pt]
\draw (20,50) node {$\phi:\forall x(x\neq a\to xRa)$};
\draw (0,30) node {$x$};
\draw (20,30) node {$y$};
\draw (40,30) node {$z$};
\draw (20,0) node {$a$};
\draw [->>,>=triangle 45](17.5,47.5) -- (1.25,32.5);
\draw [->>,>=triangle 45](20,47.5) -- (20,32.5);
\draw [->>,>=triangle 45](22.5,47.5) -- (38.75,32.5);
\draw [->>,>=triangle 45](1.25,27.5) -- (18.75,2.5);
\draw [->>,>=triangle 45](20,27.5) -- (20,2.5);
\draw [->>,>=triangle 45](38.75,27.5) -- (22.5,3.75);
\end{tikzpicture}
\caption{}\label{541-FAug17-1}
\end{figure}
The situation is similar in Figure~\ref{541-FAug17-2}, where $S=\upa{a,b,x,y}$ and where $b$ attacks $a$, $x$, and $y$. $\phi$ protects $a$ from $b$ by attacking it on the grounds that it attacks everything.

Such arguments are very common in natural reasoning. We call them ``would-would'nt'' arguments. For example, the head of the opposition party might strongly attack the policies of the current government. The head of the current government might defend his policies by saying ``well the leader of the opposition \textit{would} say all of that \textit{wouldn't} he, he always attacks everything I say, because he is the leader of the opposition''.\footnote{Thanks to L. Rivlin for pointing out this case.}
\begin{figure}[ht]
\centering
\ifx\JPicScale\undefined\def\JPicScale{1}\fi
\unitlength \JPicScale mm
\begin{tikzpicture}[x=\unitlength,y=\unitlength,inner sep=0pt]
\draw (20,50) node {$\phi:\forall x(x\neq b\to bRx)$};
\draw (0,30) node {$x$};
\draw (20,30) node {$b$};
\draw (40,30) node {$y$};
\draw (20,0) node {$a$};
\draw [->>,>=triangle 45](20,47.5) -- (20,32.5);
\draw [->>,>=triangle 45](20,27.5) -- (20,2.5);
\draw [->>,>=triangle 45](22.5,30) -- (38.75,30);
\draw [->>,>=triangle 45](17.5,30) -- (2.5,30);
\end{tikzpicture}

\caption{}\label{541-FAug17-2}
\end{figure}
\end{example}

\begin{example}\label{541-5E1}
Consider Figure~\ref{541-5F2}.
\begin{figure}[ht]
\centering
\ifx\JPicScale\undefined\def\JPicScale{1}\fi
\unitlength \JPicScale mm
\begin{tikzpicture}[x=\unitlength,y=\unitlength,inner sep=0pt]
\draw (0,20) node {$a$};
\draw [->>,>=triangle 45](0,17) -- (0,3);
\draw (0,0) node {$\phi=\exists x\neg xRx$};
\end{tikzpicture}
\caption{}\label{541-5F2}
\end{figure}
This figure presents a problem. Since $a$ is not attacked, we have $a=\text{in}$ i.e. the value of $\In(a)$ is $(\top,\top)$. Thus $\phi$ must be out, i.e. $\phi=(\bot,\bot)$. But if $\exists x\neg xRx$ is false, we get that we must have $aRa$ and so how can $a$ be in? Then we must go for $\In(a)=\text{und}=(\bot,\top)$. So $\phi$ is also und, $\phi=(\bot,\top)$. So at the world $s$ we have $\In(a)\to\bn\lor\exists\neg xRx$. This holds.

Let us look at Figure~\ref{541-5F2} slightly differently. The domain of the Figure is $S=\{a\}$. So ``\,$\exists x\phi(x)$'' simply means $\phi(a)$. So $\exists x \neg xRx$ is just $\neg aRa$. Now $a$ attacking $\neg aRa$ is $\In(a)\to\neg\neg aRa$. This is also possible because the translation into intuitionistic logic says $\In(a)\to\bn\lor\neg\neg aRa$.

It looks like we need to accept that although a node $x$ may not be attacked geometrically, it may be attacked by some $\phi$ in the network by virtue of its meaning. The simplest example is Figure~\ref{541-5F3}.
\begin{figure}[ht]
\[
a\qquad \qquad aRa
\]
\caption{}\label{541-5F3}
\end{figure}
In Figure~\ref{541-5F3} the node $aRa$ does not attack the node $a$ geometrically, but they cannot both be in.

The situation, however, is still paradoxical. In Figure~\ref{541-5F2}, ``$\phi$''=$\exists x\neg xRx$ and ``$a$'' are consistent together. The problem arises because $a$ attacks $\phi$. We can consider the database $\{a,\exists x\neg xRx,a\to\neg \exists x\neg xRx \}$. This database describes the figure: for example if $a$ is in then it does not attack itself, and so $\exists x\neg xRx$ is also in. It is inconsistent. So perhaps we resolve the inconsistency by having the attack ``out'', i.e. we cancel the attack $a\to\neg\phi$.
\end{example}

The above example impresses upon us the need simply to ``run'' the translation into intuitionistic logic and see what it does. If we can accept what it does as reasonable we can proceed from there, otherwise we analyse why we find the result unacceptable and    seek an improvement. But our next step must be now to check the translation first.

\begin{remark}\label{541-5R5}
Let us analyse more closely the problems with the network of Figure~\ref{541-5F3}.
\begin{enumerate}
\item
The network has an unattacked element $aRa$ in it. Therefore $aRa$ is in. But being in, it changes the original network. Now $a$ attacks itself, so how can $a$ also be in?
\item
Suppose we adopt the remedy that we include only wffs $\phi$ that are true in $(S,R)$. If this is the case then $\phi=\top$, so how can $\phi$ ever be out, or how can $\phi$ ever be counter-attacked? So this `remedy' is not a good solution.
\item
So what is the source of the problem? The problem arises because when we construct the theory $\DA$, we look at $\CA=(S\cup W,R)$, with $R\subseteq(S\cup W)^2$, and get the complete extensions as model of $\DA$. In such models however, $R$ changes to $R'$ and maybe even $S$ changes to $S'$ (if some $\exists\phi(x)$ is in $W$ and is in). So we lose the correct representation theorem. So we must calculate the complete extension relative to $(S'\cup W,R')$. This means that $\DA$ must be changed. Can we do that implicitly in one move? We need to write $\DAp$ for $\CA'=(S',R')$, when we do not know what $\CA'$ is!
\item
Let us see how it is done for out example of Figure~\ref{541-5F3}, perhaps we can get some new ideas. Since $aRa$ is in, the network becomes $\CA'$, Figure~\ref{541-5F6}.
\begin{figure}
\centering
\ifx\JPicScale\undefined\def\JPicScale{1}\fi
\unitlength \JPicScale mm
\begin{tikzpicture}[x=\unitlength,y=\unitlength,inner sep=0pt]
\draw (0,10) node {$a$};
\draw (30,10) node {$aRa$};
\draw [->>,>=triangle 45](0,12.5) .. controls (8.75,27.5) and (16.25,0) .. (1.25,8.75);
\end{tikzpicture}
\caption{}\label{541-5F6}
\end{figure}
So we need to write $\DAp$ and not $\DA$. However, when we wrote the intuitionistic theory for Figure~\ref{541-5F3}, we wrote $\DA$, with, for example clause $(a2)$, by looking at Figure~\ref{541-5F3} and not by looking at Figure~\ref{541-5F6}. We wrote:
\[
\bigwedge_{\text{attackers $y$ of $a$}} y\text{ is out}\to(\bn\lor a\text{ is in})
\]
Since in Figure~\ref{541-5F3} there are no attackers of $a$ and the empty conjunction is $\top$, we got $\bn\lor\In(a)$. Had we looked at Figure~\ref{541-5F6} we would have had
\[
(\neg \In(a)\lor\neg aRa)\to(\bn\lor\In(a))
\]and the solution is $\In(a)=(\bot,\top)$.

The problem is that we do not know in general what $R'$ is going to be until we compute the extension, so we do not know what to write in order to compute the extension. So we ask, are we in a catch-22 situation? The answer is that there is a solution. Since we  have the predicate $xRy$ in the language, we do not need to look at the new Figure~\ref{541-5F6}. We simply adopt the view that every node $x\in S$ attacks every node $y\in S$ by the joint attack $\In(x)\land xRy$, independent of whether it is shown in $R$ or not. So for each node $x\in S$, to say that it is not attacked by other nodes, we write:
\[
\bigwedge_{y\in S}(\neg\In(y)\lor\neg yRx).
\]
To say, in addition, that $x$ is not attacked by any $\phi$ such that $\phi Rx$, we also add $\bigwedge_{\phi Rx}\neg \phi$ to the main conjunct. So $(a2)$ becomes $(a2^*)$:
\[
\begin{array}{c}
\biggl(\Bigl(\bigl(\bigwedge_{\phi Rx}\neg \phi\bigr)\land \bigl(\bigwedge_{yRx}(\neg\In(y)\lor\neg yRx)\bigr)\Bigr)\to(\bn\lor\In(x))\biggr)\\
\land \\
\bigwedge_{\psi\in W}\Bigl(\bigl(\bigwedge_{\phi R\psi}\neg \phi\bigr)\land\bigl(\bigwedge_{yR\psi}(\neg\In(y)\to(\bn\lor\psi)\bigr)
\Bigr)
\end{array}
\tag{$a2^*$}
\]
Any model of the theory $\DAp$ will tell us what $R$ is and that will make $(a2^*)$ behave correctly.

Let us do this for Figure~\ref{541-5F2}.

$(a1)$ becomes $(a1^*)$:
\[
\bigl(\In(a)\to(\bn\lor\neg\In(a)\lor\neg aRa) \bigr)\land\bigl(\exists x\neg xRx\to(\bn\lor\neg\In(a))\bigr)
\]
$(a2)$ becomes $(a2^*)$:
\[
\bigl((\neg\In(a)\lor\neg aRa)\to(\bn\lor\In(a))\bigr)\land\bigl(\neg\In(a)\to(\bn\lor\exists x\neg xRx) \bigr)
\]
$(b1)$ becomes $(b1^*)$:
\[
\bigl(\neg\In(a)\to(\bn\lor\neg\In(a)\lor aRa) \bigr)\land\bigl(\neg\exists x\neg xRx\to(\bn\lor\In(a))\bigr)
\]
$(b2)$ becomes $(b2^*)$:
\[
\bigl((\In(a)\land aRa)\to(\bn\lor\neg\In(a))\bigr)\land\bigl(\In(a)\to(\bn\lor\neg\exists x\neg xRx) \bigr)
\]
%$(c^*)$ becomes:
%\[
%\Bigl(\bigl((a1^*)\land (a2^*)\land (b1^*)\land (b2^*)\bigr)\to\bn\Bigr)\to\bn
%\]
Let us now check what models this theory can have. Let us write the axioms from the point of view of the world $t$ (the actual world). $\bn$ is false at $t$ so we get:

\paragraph{Axioms $(a1^*)$ at $t$:}
\begin{itemize}
\item
$t\vDash\In(a) \Rightarrow t\vDash\neg\In(a)\lor\neg aRa$
\item
$t\vDash\exists x\neg xRx\Rightarrow t\vDash\neg\In(a)$
\end{itemize}
\end{enumerate}
\paragraph{Axioms $(a2^*)$ at $t$:}
\begin{itemize}
\item
$t\vDash\neg\In(a)\lor\neg aRa \Rightarrow t\vDash\In(a)$
\item
$t\vDash\neg\In(a)\Rightarrow t\vDash\exists x\neg xRx$
\end{itemize}
\paragraph{Axioms $(b1^*)$ at $t$:}
\begin{itemize}
\item
$t\vDash\neg\In(a) \Rightarrow t\vDash\In(a)\land aRa$
\item
$t\vDash\neg\exists x\neg xRx\Rightarrow t\vDash\In(a)$
\end{itemize}
\paragraph{Axioms $(b2^*)$ at $t$:}
\begin{itemize}
\item
$t\vDash\In(a)\land aRa \Rightarrow t\vDash\neg\In(a)$
\item
$t\vDash\In(a)\Rightarrow t\vDash\neg\exists x\neg xRx$
\end{itemize}

We now check what models this theory $\Delta =\{(a1^*),(a2^*),(b1^*),(b2^*)%,(c^*)
\}$ can have. Suppose we have a model, what values can $\In(a)$ take?

\paragraph{Case 1.} $\In(a)=(\top,\top)$.\quad From $(a1^*)$ we get $t\vDash\neg aRa$, but this contradicts $(b2)$.

\paragraph{Case 2.} $\In(a)=(\bot,\top)$.\quad From $(b1^*)$ we get that $t\nvDash\neg\exists x \neg xRx$ and so $s\vDash \exists x \neg xRx$. Can $t\vDash \exists x \neg xRx$? From $(a1^*)$ we get that if $t\vDash \exists x \neg xRx$ we would have $t\vDash\neg\In(a)$, which contradicts $\In(a)=(\bot,\top)$. So $t\nvDash\exists x \neg xRx$ and so $\exists x \neg xRx=(\bot,\top)$.

Do we have a model? Do all axioms hold for $\In(a)=\exists x\neg xRx=(\bot,\top)$? We check and the answer is ``yes''.

\paragraph{Case 3.} $\In(a)=(\bot,\bot)$.\quad This contradicts $(a2^*)$.
\end{remark}

\begin{remark}[Intermediate evaluation]\label{541-5F7}

Let us see what we have so far. We have a procedure as follows:
\begin{enumerate}
\item
Given a network of the form $\CA=(S\cup W),R')$ with $S$ a set of nodes and $W$ a set of wff in the language $\BL(R)$ of a binary relation $R$ and $R'\subseteq(S\cup W)^2$, we construct a theory $\DAp$ on the network $\CA'$ using clauses $(a1^*)$, $(a2^*)$, $(b1^*)$, and $(b2^*)$
%and $(c^*)$
where these clauses are constructing according to the meanings given by Table~\ref{541-MT1} of Example~\ref{541-4D5}.
\item
We can define the complete extensions for $\CA'$ as the models of $\DAp$. This definition  is motivated by our discussions so far. We need to make some soundness checks and if we can, add a representation theorem reducing this proposed definition to some familiar terms.
\end{enumerate}
Let us do a quick soundness/plausibility check for this definition. Suppose we have no $W$, i.e. $W=\varnothing$. Then our network is $\CA=(S,R)$ with $R\subseteq S\times S$.\footnote{We use the letter $\CA$ (and not $\CA'$) when referring to $(S,R)$ in order to indicate that we want to consider $(S,R)$ as a traditional network. When we wish to indicate $(S,R)$ is  a network with $W=\varnothing$ then we refer to it using $\CA'$.} We know how to find the complete extensions for $\CA$, as explained in Section~\ref{sec2} in Definition~\ref{541-D1} and we have the Representation Theorem~\ref{541-T2} showing the correctness of the translation, into propositional $\BG_3$, of clauses $(a1)$, $(a2)$, $(b1)$ and $(b2)$.% and $(c)$.
These clauses are propositional and do not use any predicates. But now we can also look at $(S,R)$ as an argument network with $W$ empty and so we need to look at $\CA$ as $\CA'=(S,R)$. This is to be considered as another network which is translated into predicate intutitionistic logic $\pG$, with the predicates $\In$ and $R$ in the language. This translation uses the clauses $(a1^*)$, $(a2^*)$, $(b1^*)$ and $(b2^*)$ %and $(c^*)$
 which follow the meanings given by Table~\ref{541-MT1}.

Do we get the same extensions? The answer is ``yes''. We will not give a proof but an explanation. Consider for example clause $(a2)$. It says:
\[
\bigwedge_{x}\bigl(\bigvee_{yRx} y\to(\bn\lor\neg x)\bigr) \tag{a2}
\]
In predicate logic this clause becomes:
\[
\forall x\bigl(\exists y(yRx\land\In(y))\to(\bn\lor\neg \In(x))\bigr)\tag{A2}
\]
we can agree that the two clauses say the same thing and this observation was indeed exploited in Section~\ref{sec4} Remark~\ref{541-4R1} and Theorem~\ref{541-4T2}. The clause $(a2^*)$ for out case of $\CA'$ becomes:
\[
\forall x\bigl( \bigvee_{\text{$y$ attacks $x$}}\text{$y$ is in}\to(\bn\lor\text{$x$ is out})\bigr). \tag{$a2^*$}
\]
Now since $W=\varnothing$,  the attacks of $x$ are just $\{y\mid yRx\}$ and therefore $(a2^*)$ and $(A2)$ are the same.

So the definition has passed this quick soundness check.
\end{remark}

\begin{remark}\label{541-RAug1}\
\begin{enumerate}
\item
Let $\CA=(S_0,R_0)$ be a network with $m$ elements. We can assume for simplicity that $S_0=\upa{1,2,\dots,n}$ and $R\subseteq S_0^2$. Let $\BL(R,\In,=)$ be an intuitionistic language with unary $\In$ and binary $R$ and the equality symbol $=$. We shall use $\BL(R,\In,=)$ to describe $(S_0,R_0)$ completely in $\pG$ intuitionistic logic by a theory which we call $\OA$.
\item
$\OA$ has the following clauses:
\begin{enumerate}
\item
Axioms for equality $=$, together with the decidedness axioms:
\[
\forall xy(x=y\lor\neg x=y)
\]
\[
\forall xy(xRy\lor\neg xRy)
\]
\item
\[
\exists x_1,\dots,x_n\Bigl(\forall y\bigl(\bigvee_{i=1}^n y=x_i\bigr)\land \PA\Bigr)
\]
where we set $\PA=\bigwedge_{(i,j)\in R_0}x_iRx_j$.
\end{enumerate}
\item
Let $(A1)$, $(A2)$, $(B1)$ and $(B2)$% and $(C)$
 be as in Remark~\ref{541-4R1} and let $\DA$ be the theory $\upa{(A1), (A2), (B1), (B2)%, (C)
 }$.
\item
The following holds:
\begin{itemize}
\item
$h$ is a model of $\DA\cup\OA$ iff $\lambda_h$ is a complete extension of $(S_0, R_0)$.
\end{itemize}
\end{enumerate}
\end{remark}

\begin{remark}\label{541-RAug2}
Continuing the discussion of Remark~\ref{541-RAug1}
we note that we can regard the network $(S_0,R_0)$ of Remark~\ref{541-RAug1} as a network $\CA'=(S_0\cup\{\OA\},R')$ in the sense of Remark~\ref{541-5F7}, with $R'=R_0$. This network contains $\OA$ unattacked. Therefore $\OA$ holds in any model $h$, and the axioms of $\DA$ make sure the model is a complete extension.

So $\OA$ describes the network and $\DA$ forces the models to be complete extensions of the network which $\OA$ describes.

Viewed in this way, we notice that $\PA=\bigwedge_{(i,j)\in R_0}x_iRx_j$ is embedded in the wff $\OA$ and plays the role of describing the attack relation $R_0$ of $\CA$. $\PA$ does this descriptive job extensionally, in a traditional manner, by listing what attacks what. In this set up, however, we can allow $\PA$ to describe $R$ axiomatically, not by listing the members of $R$ but by writing axioms for $R$. For example we can take an axiom $\phi$ to be:
\[
\phi\quad=\quad \neg\exists x xRx\land\exists y\forall z\neg zRy.
\]
$\phi$ says that $R$ has no self attacking elements and at least one element that is not attacked. By letting $\PA=\phi$ we are looking at a new type of generic networks of the form $(S_0,\phi(R_0))$, where $R_0$ is restricted axiomatically by $\phi$.\footnote{Define the concept of an {\em axiomatic argumentation network} to be of the form $(S_0,\phi(R_0))$ with $\phi$ a wff describing properties of $R$. We consider this further in Section~\ref{sec6}.}

So to give an example, consider $S_0$ as a set of two elements $\upa{a,b}$. $\phi$ allows for the following options for $R_0$ (this is Example~\ref{541-EAug3} of Section~\ref{sec6} below):
\begin{enumerate}
\item[($i$)]
$R^1_0=\varnothing$
\item[($ii$)]
$R^2_0=\upa{a\tO b}$
\item[($iii$)]
$R^3_0=\upa{b\tO a}$
\end{enumerate}
The complete extensions are
\begin{enumerate}
\item[($i^*$)]
a=b=\text{in}
\item[($ii^*$)]
a=\text{in}, b=\text{out}
\item[($iii^*$)]
a=\text{out}, b=\text{in}
\end{enumerate}
Note that we can describe $\phi$ equivalently in this case as
\[
\phi\quad=\quad \bigvee_{i=1}^3(R_0^i\land\bigwedge_{j\neq i}\neg R_0^j).
\]

We further note that we can allow $\phi$ to contain additional predicates besides $R$ which help describe what kind of $R$ we want. These additional predicates may come from an application context for $R$ and may also include the predicate $\In$. We shall say more on this in Section~\ref{sec6}.
\end{remark}

\section{Axiomatic argumentation frames (AAF)}\label{sec6}

This section continues the ideas of Remark~\ref{541-RAug1}. We now moticate the new concept of \emph{Axiomatic Argumentation Frames} (AAF) and give formal definitions.

\begin{example}\label{541-EAug1}
Suppose we are given an argumentation network $\CA=(S,R)$ which is too large in the sense that $S$ has too many members, and that we wish to select a subset of $\CA$. For example, suppose we need to give a presentation and, for practical reasons (e.g. time, intelligibility etc.), we wish to select only a subset $S'$ of $S$ to appear in this presentation.\footnote{But we would wish the presentation to remain coherent view that is as self-contained as possible.} Let us suppose we have calculated that our presentation has time to discuss exactly $k$ arguments. So we need to pick a suitable $k$ sized subset $S'$ from $S$.

But not any $S'$ will do, as we might have further conditions on what $S'$ should look like: for example we might wish it to appear to present a coherent position say by having a nontrivial complete extension; or we might wish it to reflect $S$ in some way; or we may have some other external criteria for selecting $S'$ (such as degree of relevance to the topic). When choosing $S'$ we will check whether these conditions are met by $(S',R{\upharpoonright} S')$. Ultimately, we will need to write a wff $\Psi$ describing these conditions in predicate logic and then consider the sets $(S_j,R{\upharpoonright} S_j)$ which satisfy $\Psi$.
\end{example}

\begin{definition}\label{541-DAug2}\
\begin{enumerate}
\item
Let $\BL$ be a predicate language containing a binary predicate $R$ and a unary predicate $S$. Let $\Psi$ be a consistent classical theory of $\BL$. Let $\BM=\BM(\Psi)$ be a model of $\Psi$. Let $S_{\BM}$ be the extension of $S$ in $\BM$ and let $R_{\BM}$ be the binary extension of the relation of $\BM$. We ignore the extensions of all other predicates of $\BL$ which are involved in $\BM$. We view them as parameters helping to write the axiomatic properties of $S$ and $R$ (via the theory $\Psi$).
\item
We can view $\CA_{\BM}=(S_{\BM},R_{\BM}{\upharpoonright}S_{\BM})$ as a traditional argumentation network with complete extensions $E^{\BM}_1,\dots,E^{\BM}_{r(\BM)}$.
\item
Let $S_0$ be a finite set and let $\{{\bf a}\mid a\in S_0\}$ be the set of names in $\BL$ for elements of $S_0$. Let $\phi(S_0)$ be the formula
\[
\phi(S_0)\quad =\quad \forall x\Bigl( \bigvee_{a\in S_0} x={\bf a}\Bigl)\land \bigwedge_{\substack{
a,b\in S_0
\\
a\neq b
}} {\bf a}\neq {\bf b}.
\]
Let $\Psi(S_0)=\Psi\land\phi(S_0)$.
\item
We define the notion of the axiomatic argumentation network $\CA_0=(S_0,\Psi(S_0))$ to be the family of networks obtained from models $\BM$ of $\Psi(S_0)$ as defined in item (1) above. We also define the extensions of $\CA_0$ as all the extensions $E^{\BM}_{i}$ of all possible models  $\BM$ of $\Psi(S_0)$. Note that because of the axiom $\Psi(S_0)$, all networks $(S_{\BM},R_{\BM}{\upharpoonright}S_{\BM})$ satisfy $S_{\BM}=S_0$.
\end{enumerate}
\end{definition}

\begin{example}\label{541-EAug3}
Let $S_0=\{a,b\}$ and let $\Psi(S_0)$ be
\[
\begin{array}{rcl}
\Psi(S_0) &=&\forall x(x={\bf a}\lor x={\bf b})\land {\bf a}\neq {\bf b}
\land \exists x\forall y\neg yRx\land\neg {\bf a}R {\bf a}\land \neg {\bf b}R {\bf b}
\end{array}
\]
Then $(S_0,\Psi(S_0))$ has the following extensions:
\begin{itemize}
\item
$a=b=\text{in}$
\item
$a=\text{in}$, $b=\text{out}$
\item
$a=\text{out}$, $b=\text{in}$
\end{itemize}
\end{example}

\begin{example}\label{541-EAug4}
Following Example~\ref{541-EAug1}, assume we have 5 possible arguments of which we need to select 2 good ones to discuss and assume $R$ is as in Figure~\ref{541-FAug5} and that we would prefer to discuss any argument $a_n$ over $a_{n+1}$ if we could.\footnote{That is, we rank $a_1$ as more worthy of discussion than $a_2$, $a_2$ as more worthy than $a_3$ etc.}
\begin{figure}[ht]
\centering
\ifx\JPicScale\undefined\def\JPicScale{1}\fi
\unitlength \JPicScale mm
\begin{tikzpicture}[x=\unitlength,y=\unitlength,inner sep=0pt]
\draw (0,20) node {$a_4$};
\draw (40,20) node {$a_1$};
\draw (20,10) node {$a_3$};
\draw (0,0) node {$a_5$};
\draw (40,0) node {$a_2$};
\draw [->>,>=triangle 45](3.75,18.75) -- (17.5,11.25);
\draw [->>,>=triangle 45](23.75,12.5) -- (35,17.5);
\draw [->>,>=triangle 45](40,17.5) -- (40,2.5);
\draw [->>,>=triangle 45](35,1.25) -- (23.75,7.5);
\draw [->>,>=triangle 45](17.5,7.5) -- (3.75,1.25);
\draw [->>,>=triangle 45](0,2.5) -- (0,17.5);
\end{tikzpicture}
\caption{}\label{541-FAug5}
\end{figure}

We ask what we can take as $(S_j,R{\upharpoonright} S_j)$?
\begin{enumerate}
\item
If $S_j=\upa{a_1,a_2}$, then because $a_1\tO a_2$, we do not get enough choice.
\item
If we take $S_j=\upa{a_1,a_2,a_3}$, we do not get a proper extension.
\item
If we take  $S_j=\upa{a_1,a_2,a_3,a_4}$ we get the extension $E=\upa{a_1,a_4}$.
\end{enumerate}
This is our answer.
\end{example}

\begin{example}[Disjunctive Networks]\label{541-EAug5}

We show that networks with disjunctive attacks (see~\cite{541-19}) can be easily represented as Axiomatic Networks.

Networks with disjunctive attacks have the form $(S,\rhod)$, where $\rhod\subseteq S\times(2^S-\varnothing)$. A typical disjunctive attack of $z$ on $Y=\upa{y_1,\dots,y_k}$, $z\tO Y$, is drawn in Figure~\ref{541-FAug6}.
\begin{figure}[ht]
\centering
\ifx\JPicScale\undefined\def\JPicScale{1}\fi
\unitlength \JPicScale mm
\begin{tikzpicture}[x=\unitlength,y=\unitlength,inner sep=0pt]
\draw (20,40) node {$z$};
\draw (0,0) node {$y_1$};
\draw (40,0) node {$y_k$};
\draw (20,0) node {\dots};
\draw (20,37.5) -- (20,20);
\draw [->>,>=triangle 45](20,20) -- (2.5,2.5);
\draw [->>,>=triangle 45](20,20) -- (37.5,2.5);
\end{tikzpicture}
\caption{}\label{541-FAug6}
\end{figure}
One of the options for the semantics for $(S,\rhod)$ is to understand Figure~\ref{541-FAug6} as saying:
\begin{itemize}
\item
if $z=\text{in}$ then at least one of $y_i\in Y$ must be out.
\end{itemize}
The translation of the above statement for each attack $z\tO Y$ in $\rhod$ is $\Psi(\rhod,R)$:
\[
\Psi(\rhod,R) = \bigwedge_{z\rhod Y}\bigvee_{y\in Y}zRy
\]
where $R$ is a new symbol for a relation $R\subseteq S\times S$. Therefore, according to this semantics (called the \textit{open reductionist approach} in~\cite{541-19}), the extensions of $(S,\rhod)$ are the same as the extensions of $(S,\Psi(R))$ as an axiomatic network (see~\cite[Definition 2.8, Theorem 2.9]{541-19}.
\end{example}

\begin{example}[Conjunctive Networks]\label{541-EAug7}
Conjunctive networks have the form $(S,\rhoc)$, where $\rhoc\subseteq(2^S-\varnothing)\times S$. A typical conjunctive attack of  $Y=\upa{y_1,\dots,y_k}$ on $z$, $Y\tO z$, is drawn in Figure~\ref{541-FAug8}.
\begin{figure}[ht]
\centering
\ifx\JPicScale\undefined\def\JPicScale{1}\fi
\unitlength \JPicScale mm
\begin{tikzpicture}[x=\unitlength,y=\unitlength,inner sep=0pt]
\draw (20,0) node {$z$};
\draw (0,40) node {$y_1$};
\draw (40,40) node {$y_k$};
\draw (20,40) node {\dots};
\draw [->>,>=triangle 45](20,20) -- (20,2.5);
\draw (37.5,37.5) -- (20,20);
\draw (2.5,37.5) -- (20,20);
\end{tikzpicture}
\caption{}\label{541-FAug8}
\end{figure}
The condition for conjunctive attacks is
\[
\bigwedge_i y_i=\text{in}\text{ implies } z=\text{out}.
\]
Conjunctive attacks can be implemented using the relation $R\subseteq S\times S$ of point to point attacks, with the help of additional auxiliary points. For each $Y\rhoc z$, let the auxiliary points be
\[
H(Y,z)\quad =\quad \upa{\alpha(y,Y,z)\mid y\in Y}\cup\upa{\beta(Y,z)}
\]
Consider Figure~\ref{541-FAug9}, (this construction is from~\cite{541-20}).
\begin{figure}[ht]
\centering
\ifx\JPicScale\undefined\def\JPicScale{1}\fi
\unitlength \JPicScale mm
\begin{tikzpicture}[x=\unitlength,y=\unitlength,inner sep=0pt]
\draw (20,20) node {$\beta(Y,z)$};
\draw (0,60) node {$y_1$};
\draw (40,60) node {$y_k$};
\draw (20,60) node {\dots};
\draw (20,0) node {$z$};
\draw (0,40) node {$\alpha(y_1,Y,z)$};
\draw (40,40) node {$\alpha(y_k,Y,z)$};
\draw (20,40) node {\dots};
\draw [->>,>=triangle 45](0,57.5) -- (0,43.75);
\draw [->>,>=triangle 45](40,57.5) -- (40,42.5);
\draw [->>,>=triangle 45](1.25,37.5) -- (18.75,22.5);
\draw [->>,>=triangle 45](38.75,37.5) -- (21.25,22.5);
\draw [->>,>=triangle 45](20,17.5) -- (20,2.5);
\end{tikzpicture}
\caption{}\label{541-FAug9}
\end{figure}
Clearly, given $Y$ and $z$ such that $Y\tO z$, if we add the points $H(Y,z)$ and express that they are all distinct and unique to $(Y,z)$, then the point $\beta(Y,z)$ fills the role of $Y$ and the attack $\beta(Y,z)\tO z$ takes the role of $Y\rhoc z$.

It is also clear that in Figure~\ref{541-FAug9}, $z$ is out only if all $y\in Y$ are in. So if we are given a conjunctive network $(S_0,\rho_0)$,\footnote{We drop the subscript $\land$ to avoid a proliferation of symbols.} with $\rho_0\subseteq S_0\times(2^{S_0}-\varnothing)$ we can pass to a new network $(S_1,R_1)$ where: $S_1=S_0\cup\bigcup_{Y\rho_0 z}H(Y,z)$; and $R_1$ is defined as in Figure~\ref{541-FAug9} for each $H(Y,z)$. $(S_1,R_1)$ will `implement' $(S_0,\rho_0)$ in the sense that the complete extensions $E_1$ of $(S_1,R_1)$ generate, by the projection $E_0=E_1\cap S_0$, exactly the complete extensions $E_0$ of $(S_0\rho_0)$. In other words
\begin{itemize}
\item
$E_0$ is an extension of $(S_0,\rho_0)$ iff for some extension $E_1$ of $(S_1,R_1)$ we have $E_0=E_1\cap S_0$. \end{itemize}

Using the above route we can implement conjunctive networks as axiomatic networks. We need to define a language $\BL$ and a $\Psi$ in $\BL$ which allows us to identify, using $\Psi$, the network $(S_0,\rho_0)$. So now let us define $\rho$ by a formula in an axiomatic argumentation network. The axiom $\Psi$ we take simply expresses the situation of Figure~\ref{541-FAug9}.

We need a language $\BL(S,R,\rho)$ and we define $\rho$ using $R$. We need some auxiliary notation.

Let $S_0$ be a finite set and assume we have names $\bs$ in $\BL$ for every $s\in S_0$. Let $\BS_0=\upa{\bs\mid s\in S_0}$. Given a set of variables $V$, let $D(V)$ be the formula:
\[
D(V)\quad =\quad \bigwedge_{\substack{x\neq y\\ x,y\in V}}\neg(x=y)\land\forall x\Bigl(\bigvee_{x\in V}z=x \Bigr)
\]
Let $\exists_V$ be the prefix quantifier (for $V$ finite, i.e. $V=\upa{v_1,\dots,v_n}$) then:
\[
\exists_V\phi\quad =_\text{def}\quad \exists v_1\dots v_n\Bigl(D(V)\land \phi \Bigr).
\]
Now suppose we are given a conjunctive network $\CA_0=(S_0,\rho_0)$, with $\rho_0\subseteq S_0\times(2^{S_0}-\varnothing)$. Let $\BL(S,R,\rho,\BS_0)$ be a predicate language with unary $S$, binary $R$ and $\rho$ and the set of constants $\BS_0=\upa{\bs\mid s\in S_0}$. Let $V_0$ be the set of distinct variables
\[
V_0\quad=\quad\bigcup_{Y\rho_0 z}\upa{\alpha(y,Y,z),\beta(Y,z)\mid y\in Y}.
\]
Let $\Psi(S,R,\rho,\BS_0)$, or $\Psi$ for short, be the formula
\[
\begin{array}{lcrll}
\Psi&=&
\exists_{V_0\cup\BS_0}\biggl[&\forall z\Bigl(S(z)\leftrightarrow \bigvee\limits_{s\in S_0} z=\bs \Bigr) \\
&&& \land \bigwedge\limits_{\substack{Y\rho_0 z  \\ Y\subseteq S_0\\ z\in S_0}}\Phi(Y,z)
\\
&&& \land
\bigwedge\limits_{\substack{Y\rho_0 s \\ Y\subseteq S_0 \\ s\in S_0}} \bigwedge\limits_{y\in Y}\Bigl(yR\alpha(y,Y,s)\land \alpha(y,Y,s)R\beta(Y,s)\land\beta(Y,s)R\bs
\Bigr) & \biggr]
\end{array}
\]
where $\Phi(Y,z)$ is the following wff:
\[
\begin{array}{lcrll}
\Phi(Y,z)&=&
Y\rho z \leftrightarrow \exists y_1\dots y_k\biggl(&\bigwedge\limits_{i\neq j} u_i\neq y_j\land \bigwedge\limits_{j}S(y_j)\land S(z) \\
&&& \land \exists u\exists x_1\dots x_k\Bigl(\bigwedge\limits_j\neg S(x_j)\land\bigwedge\limits_{i\neq j}x_i\neq x_j\land\neg S(u) \\
&&&\multicolumn{1}{r}{\land\bigwedge\limits_{j}(y_jRx_j\land x_j Ru)\land uRz\Bigr)}
& \biggr)
\end{array}
\]
Let $\BM$ be a model of $\Psi$. The original $(S_0,\rho_0)$ can be obtained from this model by letting $S_0$ be the extension $S_{\BM}$ of $S$ in the model. Also the relation $Y\rho_0 s$, for $Y\subseteq S_0$ and $s\in S_0$ can be defined as the extension $S_{\BM}$ of $\rho$.
\end{example}

\begin{example}[Abstract Dialectical Frameworks]\label{541-EAug10}

We now implement Abstract Dialectical Frameworks (ADF),~\cite{541-21}, in our Axiomatic approach (AAF).

We first describe the first order version of ADF. Consider Figure~\ref{541-FAug11}. This figure describes the basic configuration of an attack on a point $x$ in $(S,R)$.
\begin{figure}[ht]
\centering
\ifx\JPicScale\undefined\def\JPicScale{1}\fi
\unitlength \JPicScale mm
\begin{tikzpicture}[x=\unitlength,y=\unitlength,inner sep=0pt]
\draw (20,0) node {$x$};
\draw (0,20) node {$y_1$};
\draw (40,20) node {$y_k$};
\draw (20,20) node {\dots};
\draw [->>,>=triangle 45](1.25,17.5) -- (18.75,2.5);
\draw [->>,>=triangle 45](38.75,17.5) -- (21.25,2.5);
\end{tikzpicture}
\caption{}\label{541-FAug11}
\end{figure}
$\upa{y_1,\dots,y_k}$ are all the attackers of $x$. In traditional Dung semantics we have that:
\begin{itemize}
\item
$x=\text{in}$ iff all $y_i$  are out
\end{itemize}
the ADF semantics gives a local condition $F(x)$ on $\upa{y_1,\dots,y_k}$, saying
\begin{itemize}
\item
$x=\text{in}$ iff one of certain in/out distributions on $\upa{y_1,\dots,y_k}$ holds.
\end{itemize}
Assuming that any of these distribution can be expressed in classical propositional logic on $\upa{y_1,\dots,y_k}$, we can write $F(x)$ as a wff of the form
\[
F(x)\quad =\quad \bigvee_{\epsilon\in\Delta_x}\bigwedge_i y_i =\epsilon(i)
\]
where $\Delta_x$ is a set of vectors $\epsilon:\upa{1,\dots,k}\mapsto\upa{0,1}$ where ``$y=1$'' means ``$y=\text{in}$'' and ``$y=0$'' means ``$y=\text{out}$''.

We can also write $F(x)$ as $\bigvee_{\epsilon\in\Delta_x}\bigwedge_i y_i^{\epsilon(i)}$, where $y^1=y$ and $y^0=\neg y$. So we have:
\begin{itemize}
\item
$x=\text{in}$ iff $\bigvee_{\epsilon\in\Delta_x}\bigwedge_i y_i =\epsilon(i)$
\end{itemize}
or
\begin{itemize}
\item
$F_S = \bigwedge_{x\in S} \bigl(x\leftrightarrow F(x)\bigr)=\top$
\end{itemize}
The latter formulation of $F$ is mentioned as formula (1) on the third page of~\cite{541-20}. To conform with the notation of the present paper we should write $F$ as $F'$
\begin{itemize}
\item
$F'(x) = \bigvee_{\epsilon\in\Delta_x}\bigwedge_i \In(y_i)^{\epsilon(i)}$,
\end{itemize}
and the full wff is
\begin{itemize}
\item
$F'_S = \bigwedge_{x\in S} \bigl(\In(x)\leftrightarrow F'(x)\bigr)=\top$.
\end{itemize}

The reader should note that the formula $F'_S$ cannot be considered an argument in our sense of Section~\ref{sec5}, because $F'$ is not a wff about $R$. We cannot add it to the specification of the model because it contradicts the traditional Dung requirement:
\[
\bigwedge_{x\in S} \bigl(\In(x)\leftrightarrow \bigwedge_{yRx}\neg\In(y)\bigr)
\]
which is part of our specification. So our only recourse, if we want to implement (ADF) in (AAF), is to add auxiliary points and follow similar route as taken in Example~\ref{541-EAug7}.

To simplify the presentation and to save on complicated wffs let us present what is to be done for the configuration of Figure~\ref{541-FAug12}.
\begin{figure}[ht]
\centering
\ifx\JPicScale\undefined\def\JPicScale{1}\fi
\unitlength \JPicScale mm
\begin{tikzpicture}[x=\unitlength,y=\unitlength,inner sep=0pt]
\draw (0,30) node {$a$};
\draw (20,30) node {$b$};
\draw (40,30) node {$c$};
\draw (20,0) node {$x=((a\land\neg b)\lor c)$};
\draw [->>,>=triangle 45](1.25,26.25) -- (16.25,3.75);
\draw [->>,>=triangle 45](20,26.25) -- (20,3.75);
\draw [->>,>=triangle 45](38.75,26.25) -- (23.75,3.75);
\end{tikzpicture}
\caption{}\label{541-FAug12}
\end{figure}
Let $Y=\upa{a,b,c}$ be the set of all attackers of $x$. The condition for $x$ to be ``in'' is that either both $a=\text{in}$ and $b=\text{out}$ or else $c=\text{in}$.

To implement this figure we add the auxiliary points $\beta(x,Y)$, $\gamma(x,a,b,Y)$ and $\gamma(x,c,Y)$.  $\beta(x,Y)$ stands for $\neg x$ and $\gamma(x,a,b,Y)$ stands for the disjunct $a\land\neg b$ and $\gamma(x,c,Y)$ stands for the disjunct $c$.

We also need, for each $\neg z$ at each disjunct, a point to stand for $z$. In our case we have the disjunct $a\land\neg b$ and $\neg z=\neg b$. So we need a point $\delta(\gamma(x,a,b,Y),b)$. We now create Figure~\ref{541-FAug13}.

We will not need some of these additional points, as you will see in the Figure~\ref{541-FAug13} we need only deal with disjuncts which have negated atoms in them.
\begin{figure}[ht]
\centering
\ifx\JPicScale\undefined\def\JPicScale{1}\fi
\unitlength \JPicScale mm
\begin{tikzpicture}[x=\unitlength,y=\unitlength,inner sep=0pt]
\draw (0,65) node {$a$};
\draw (30,65) node {$b$};
\draw (60,65) node {$c$};
\draw (30,45) node {$\delta(\gamma(a,b,Y),b))$};
\draw (30,20) node {$\beta(x,Y)$};
\draw (30,0) node {$x$};
\draw [->>,>=triangle 45](58.75,61.25) -- (32.5,23.75);
\draw [->>,>=triangle 45](30,61.25) -- (30,47.5);
\draw [->>,>=triangle 45](1.25,61.25) -- (27.5,23.75);
\draw (30,42.5) -- (21.25,32.5);
\draw [->>,>=triangle 45](30,16.25) -- (30,2.5);
\draw (26.5,38.4) [rotate=90.0]  arc (-43.4:39.6:8.12 and 7.19);
\draw (21,38) node {$\land$};
\end{tikzpicture}
\caption{}\label{541-FAug13}
\end{figure}

\end{example}

\section{Conclusion and discussion}

We have translated argumentation networks of the form $\CA=(S,R)$ into intuitionistic logic, $\tau:A\mapsto\tau A$, in such a way that the following holds:
\begin{enumerate}
\item
Complete extensions $E$ of $\CA$ correspond to models $h_E$ of $\tau(\CA)$ and vice versa.
\item
The attack relation in $\CA$, $x\tO y$, is interpreted intuitionistically as $x\rightarrow \neg y$ (where $\rightarrow$ and $\neg$ are intuitionistic propositional connectives).
\item
As a byproduct of this representation we obtained a coherent way of adding, to any $\CA$, statements $\phi(R)$ about $R\in \CA$ to serve also as arguments. Moreover, our intuitionistic interpretation $\tau$ allowed us to give these statements semantics (see Examples~\ref{541-EAug17-0} and~\ref{541-5E1}).
\item
The reader might have some doubts about the idea of such an interpretation. Let us make several additional points.
\begin{enumerate}
\item
Members of the argumentation community take pride in the added value of the argumentation point of view. They stress the programme of interpreting, for example, non-monotonic systems into it. They should note that any support or objections to interpreting Abstract Argumentation {\em out} into systems $Y$, may equally be applied to interpreting any system $X$ {\em into} an argumentation framework.
\item
Whenever we give a good interpretation of a system $X$ into a system $Y$, our understanding of both systems stands to benefit. System $X$ may get additional meaning as part of $Y$ and the formal understanding of system $Y$ may be enhanced through its hosting of system $X$. One such example was the translation of modal logic into classical logic via the translation $\tau$:
\[
\tau(x\vDash \Box A) = \forall y (xRy\rightarrow \tau(y\vDash A))
\]
Through this translation, propositional modal logic obtained another semantics, and formal facts about fragments of first order classical logic became apparent. For example, through the observation (e.g. in~\cite{541-24}) that modal logic needs to use only 2 bounded, i.e. guarded, variables in this translation, we obtain decidability results for guarded quantifiers in classical first order logic, (see~\cite{541-25,541-26}).
\end{enumerate}
\item
We would like to compare the translation of this paper with that of~\cite{541-22}.

The paper of Dvorak {\em et. al.}, \cite{541-22}, is an example of a general translation. The formal mathematical language is second order monadic first order logic. This can serve as the modelling language for the majority, if not all, of the varieties of argumentation networks. It is intended by the authors to be to argumentation as ALGOL is to algorithms. It is an exact mathematical logic language strong enough to express whatever you want to say about argumentation networks. To quote the authors of~\cite{541-22}:
\begin{quote}
Begin quote.\\
We propose the formalism of monadic second order logic (MSO) as
a unifying framework for representing and reasoning with various
semantics of abstract argumentation. We express a wide range of
semantics within the proposed framework, including the standard
semantics due to Dung, semi-stable, stage, cf2, and resolution-based
semantics. We provide building blocks which make it easy and
straightforward to express further semantics and reasoning tasks.
Our results show that MSO can serve as a \textit{lingua franca} for abstract
argumentation that directly yields to complexity results. In particular, we obtain that for argumcntation frameworks with certain
structural properties the main computational problems with respect
to MSO-expressible semantics can all be solved in linear time. Furthermore,
we provide a novel characterisation of resolution-based grounded semantics.\\
End quote.
\end{quote}
The monadic second order theory has a symbol $R$ for the attack relation $xRy$ and quantifiers over subsets of $S\in \CA$. For all its strength, it cannot express attacks on attacks. For networks with arguments of the form $\phi(R)$ the language cannot express, e.g. $xR\phi(R)$ (i.e. that a point $x$ attacks the fact that the attack relation $R$ has the property described by $\phi$). One could extend the language of~\cite{541-22} to allow substitution of formulas like $\phi(R)$ for variables such as $y$ in $xRy$. But doing this requires an entirely new theory of self-fibring (the general complexities of which are studied in~\cite{541-23}). The solution we can adopt on the basis of the present paper is simple: we use an intuitionistic monadic second order theory. In other words, we combine out intuitionistic interpretation with that of~\cite{541-22}. This simple move also shows the value of the intuitionistic interpretation.
\end{enumerate}

\section*{Acknowledgements}

We are grateful to David Pearce for his valuable comments.

\appendix
\section*{Appendix}
\section{The logic $\BG_3$}
The logic is called $\BG_3$ because it corresponds to G\"{o}del's $n$-valued logic with $n=3$. But the logic
is due to Heyting. It was first axiomatised by Lukasiewicz in 1938 by
\[
(\neg x \to y) \to  (((y \to  x)\to  y) \to  y)
\]
But there is shorter axiom due to Umezawa later proved complete by Hosoi:
\[
x\vee (\neg y \vee (x\to y))
\]
David Pearce~\cite{541-3},  calls the logic $HT$ or `here-and-there' because of the two linearly ordered worlds.

$\BG_3$ can also be characterised by adding the following additional  two axioms to the axioms of propositional intuitionistic logic.

\paragraph{Dummett LC axiom.}
\[
(x\to y)\vee (y\to x)
\]
This gives linearity.

\paragraph{Peirce's axiom for height 2.}
\[
((x\to (((y\to z)\to y)\to y))\to x)\to x.
\]

\section{Semantics for $\BG_3$}\label{sem-prop}

We have two possible worlds $t$ and $s$, with the ordering $t<s$. $t$ is considered the actual world. The language of $\BG_3$ contains atoms and the intuitionistic connectives $\{\land,\lor,\to,\neg,\bot,\top\}$.

An assignment $h$ to the atoms gives them values in $t$ and in $s$. We write $h(x,q)\in\{\top,\bot\}$ for $x\in\{t,s\}$ and $q$ atomic. We also write $h(q)=(v_1,v_2)$ where $v_1=h(t,q)$ and $v_2=h(s,q)$. We require that
\[
h(t,q)=\top \text{ implies } h(s,q)=\top,
\]
so the value $h(q)=(\top,\bot)$ is forbidden.

We evaluate wff of the language as follows. For a given $h$, and a given $a\in\{t,s\}$
\begin{itemize}
\item
$a\vDash_h\top$
\item
$a\nvDash_h\bot$
\item
$a\vDash_h q$ iff $h(a,q)=\top$ (for any atomic $q$)
\item
$a\vDash_h A\land B$ iff  $a\vDash_h A $ and $a\vDash_h B$
\item
$a\vDash_h A\lor B$ iff  $a\vDash_h A $ or $a\vDash_h B$
\item
$a\vDash_h \neg A$ iff  $b\nvDash_h A$ for all $b\in\{t,s\}$ s.t. $a\leq b$.
\item
$a\vDash_h A\to B$ iff  $b\nvDash_h A $ or $b\vDash_h B$, for all $b\in\{t,s\}$ s.t. $a\leq b$.
\end{itemize}
We have the following theorems:
\begin{enumerate}
\item
For any $A$, if $t\vDash_h A$ then $s\vDash_h A$.
\item
$\BG_3\vdash A$ iff for any $h$, $t\vDash_h A$.
\end{enumerate}

\section{Semantics for predicate $\BG_3$}\label{sem-quan}

We consider a special version of predicate $\BG_3$, where the language has the additional quantifiers $\forall x$, $\exists x$ and the predicates $xRy$ and $\In(x)$. The logic is a constant domain logic for the worlds $\{t,s\}$ and the relation $R$ is {\em decided}, meaning $xRy$ holds at both $t$ and $s$ or neither, for any $x$ and $y$.

We thus extend the semantics of Section~\ref{sem-prop} with a domain $D$, relation $\rho\subseteq D\times D$, and a relation $\iota\subset \{t,s\}\times D$ where $\iota(t,d)\subseteq \iota(s,d)$ for $d\in D$. Assignments, which we now denote by $v$, are functions from variable symbols to elements of $D$.

We evaluate wff of the language as follows. For a given $v$, and a given $a\in\{t,s\}$
\begin{itemize}
\item
$a\vDash_v\top$
\item
$a\nvDash_v\bot$
\item
$a\vDash_v xRy$ iff $v(x)\rho v(y)$
\item
$a\vDash_v \In(x)$ iff $\iota(a,v(x))$
\item
$a\vDash_v A\land B$ iff  $a\vDash_v A $ and $a\vDash_v B$
\item
$a\vDash_v A\lor B$ iff  $a\vDash_v A $ or $a\vDash_v B$
\item
$a\vDash_v \neg A$ iff  $b\nvDash_v A$ for all $b\in\{t,s\}$ s.t. $a\leq b$.
\item
$a\vDash_v A\to B$ iff  $b\nvDash_v A $ or $b\vDash_v B$, for all $b\in\{t,s\}$ s.t. $a\leq b$.
\item
$a\vDash_v \forall x A$ iff  $b\vDash_{v[x\mapsto d]} A $ for all $d\in D$ and for all $b\in\{t,s\}$ s.t. $a\leq b$.
\item
$a\vDash_v \exists x A$ iff  $a\vDash_{v[x\mapsto d]} A $ for some $d\in D$.\footnote{Where $v[x\mapsto d]$ is a function identical to $v$ except $v(x)=d$.}
\end{itemize}

\end{document}